\definecolor{fxtarget}{rgb}{0.0000,0.0000,0.4823}
\newcommand{\mt}[1]{\mathtt{#1}}
\newcommand{\str}[1]{\langle #1 \rangle}
\newcommand{\minid}{\mathsf{min\_id}}
\newcommand{\CT}{\mathsf{CT}}
\newcommand{\ctlcs}{\mathsf{ct\_lcs}}
\newcommand{\minst}{\mathsf{min}}
\newcommand{\cand}{\mathsf{cand}}
\newcommand{\lnd}{\mathit{LND}}
\newcommand{\ND}{{\mathit{ND}}}
\newcommand{\Lw}{L_w}
\newcommand{\lcs}{\mathsf{LCS}}
\newcommand{\nonl}{\renewcommand{\nl}{\let\nl\oldnl}}
\newcommand{\zero}{\vb{0}}
\newcommand{\CTLCSprob}{\textsf{CT-LCS}}
\newcommand{\CTMSeqprob}{\textsf{CT-MSeq}}
\newcommand{\SCTMSeqprob}{\textsf{s-CT-MSeq}}
\theoremstyle{thmstyleone}\newtheorem{theorem}{Theorem}\newtheorem{lemma}{Lemma}\newtheorem{corollary}{Corollary}\newtheorem{claim}{Claim}
\theoremstyle{thmstyletwo}\newtheorem{problem}{Problem}\newtheorem{observation}{Observation}
\theoremstyle{thmstylethree}\newtheorem{definition}{Definition}
\begin{document}

\title[Subsequence Matching and LCS under Cartesian-Tree Equivalence]{Subsequence Matching and LCS under Cartesian-Tree Equivalence}

\author[1]{\fnm{Taketo} \sur{Tsujimoto}}\email{tsujimoto.taketo.852@s.kyushu-u.ac.jp}

\author*[1]{\fnm{Yuki} \sur{Yonemoto}}\email{yonemoto.yuuki.240@s.kyushu-u.ac.jp} 

\author[2]{\fnm{Hiroki} \sur{Shibata}}\email{shibata.hiroki.753@s.kyushu-u.ac.jp} 

\author[3]{\fnm{Takuya} \sur{Mieno}}\email{tmieno@uec.ac.jp} 

\author[4]{\fnm{Yuto} \sur{Nakashima}}\email{nakashima.yuto.003@m.kyushu-u.ac.jp} 

\author[4]{\fnm{Shunsuke} \sur{Inenaga}}\email{inenaga.shunsuke.380@m.kyushu-u.ac.jp} 

\affil[1]{\orgdiv{Department of Information Science and Technology, Kyushu University}, \country{Japan}}

\affil[2]{\orgdiv{Joint Graduate School of Mathematics for Innovation, Kyushu University}, \country{Japan}}

\affil[3]{\orgdiv{Department of Computer and Network Engineering, University of Electro-Communications}, \country{Japan}}

\affil[4]{\orgdiv{Department of Informatics, Kyushu University}, \country{Japan}}

\abstract{
  Two strings of the same length are said to \emph{Cartesian-tree match}~(\emph{CT-match})
  if their Cartesian-trees are isomorphic [Park et al., TCS 2020].
  Cartesian-tree matching is a natural model that allows for capturing
  similarities of numerical sequences.
  Oizumi et al. [CPM 2022] showed that subsequence pattern matching
  under CT-matching model (\emph{CT-MSeq}) can be solved in $O(nm \log \log n)$ time,
  where $n$ and $m$ are text and pattern lengths, respectively.
  This current article follows this line of research,
  and gives the following new results:\\
  (1) An $O(nm)$-time CT-MSeq algorithm for binary alphabets.\\
  (2) An $O((nm)^{1-\epsilon})$-time conditional lower bound for the CT-MSeq problem on alphabets of size 4,
  for any constant $\epsilon > 0$, under the Orthogonal Vector Hypothesis (OVH).\\
  Further, we introduce the new problem of 
  \emph{longest common subsequence under CT-matching} (\emph{CT-LCS})
  for two given strings $S$ and $T$ of length $n$,
  and present the following results:\\
  (3) An $O(n^6)$-time CT-LCS algorithm for general ordered alphabets.\\
  (4) An $O(n^2 / \log n)$-time CT-LCS algorithm for binary alphabets.\\
  (5) An $O(n^{2-\epsilon})$-time conditional lower bound
  for the CT-LCS problem on alphabets of size 5, for any constant $\epsilon > 0$, under OVH.
}

\keywords{Cartesian-tree, Longest Common Subsequence, Subsequence Matching, Conditional Lower bound}

\maketitle

\section{Introduction}

\subsection{Cartesian-tree equivalence of strings}

In the recent large-scale data era,
efficiently-computable string matching models
which are capable of capturing \emph{structural similarities}
are important for the analysis of numerical sequences such as time series.
For instance, in financial markets,
it seems more natural to focus on
price fluctuation patterns in stock data,
than to look for identical matches.

\emph{Order-preserving matching} (\emph{OP-matching})~\cite{kim:eades:etal:park:tokuyama:tcs2014oppm,kubica2013linear} is a natural model for dealing with numerical sequences: Two strings $A$ and $B$ of length $n$ are said to be OP-match (or OP-equivalent) iff the lexicographical rank of $A[i]$ in $A$ and that of $B[i]$ in $B$ are equal for all $1 \leq i \leq n$.

\emph{Cartesian-tree matching} (\emph{CT-matching}), first proposed by Park et al.~\cite{ParkBALP20}, is another model for dealing with numerical sequences:
Two strings $A$ and $B$ of length $n$ are said to be CT-match (or CT-equivalent) iff the (unlabeled) Cartesian-trees~\cite{gabow1984scaling} of $A$ and $B$ are isomorphic.
The CT-matching model is a relaxation of the OP-matching model,
i.e., any OP-matching strings also CT-match,
but the opposite is not true
(for instance, $A = \str{\mt{1},\mt{1},\mt{2}}$ and $B = \str{\mt{1},\mt{1},\mt{1}}$ CT-match, but they do not OP-match).
CT-matching has attracted attention in terms of pattern matching~\cite{ParkBALP20,SongGRFLP21},
string periodicity~\cite{ParkBALP20}, and indeterminate strings~\cite{GawrychowskiGL20}.

\subsection{Subsequence matching under Cartesian-tree equivalence}

In this article, we first deal with \emph{subsequence} matching under the CT-matching model (\emph{CT-MSeq}):
Given a text $T$ of length $n$ and a pattern $P$ of length $m$,
find all minimal intervals $[i,j]$ in $T$
such that $T[i..j]$ contains a subsequence $Q$ that CT-matches $P$.
This can be seen as the CT-matching version of the classical \emph{episode matching}~\cite{DasFGGK97}.
Oizumi et al.~\cite{OizumiKMIA22} showed an algorithm
that solves the CT-MSeq problem in $O(nm \log \log n)$ time and $O(n \log m)$ space
for general ordered alphabets of arbitrary size.
This is interesting since 
subsequence matching is NP-hard under OP-matching~\cite{bose:ipl1998lis:npc:pattern}.

We revisit the CT-MSeq problem and present the following new results in the case of smaller alphabets:
\begin{enumerate}
\item[(1)] an $O(nm)$-time and $O(n)$-space CT-MSeq algorithm for binary alphabets.
\item[(2)] an $O((nm)^{1-\epsilon})$-time conditional lower bound for the CT-MSeq problem on alphabets of size 4, for any constant $\epsilon > 0$,
  under the Orthogonal Vector Hypothesis (OVH).
\end{enumerate}

We achieve Result (1) with the $O(nm)$-time solution
in the binary case
by exploiting interesting properties of CT-matching on binary strings.
Result (2), which extends the quadratic conditional lower bound
for episode matching~\cite{BilleGMSW22},
implies that Oizumi et al.'s $O(nm \log \log n)$-time solution for the CT-MSeq problem~\cite{OizumiKMIA22} is optimal up to logarithmic factors,
unless OVH fails. 
\tmnote*{modified}{We remark that OVH is implied by the famous Strong Exponential Time Hypothesis (SETH)~\cite{BringmannK15}.
}

\subsection{LCS under Cartesian-tree model}

We then extend our research to the well-studied class of string problem,
\emph{longest common subsequences} (\emph{LCSs}),
that is one of the most fundamental models for measuring string similarities.

It is well known that, under the exact matching model,
(the length of) an LCS of two given strings $S$ and $T$ of length $n$ can be computed in $O(n^2)$ time and space by standard dynamic programming, or in $O(n^2 / \log n)$ time and space~\cite{MASEK198018} by the so-called ``Four-Russians'' method in the word RAM~\cite{ArlazarovEtAl1970}.
These quadratic and weakly subquadratic time bounds are believed to be essentially optimal,
since a strongly subquadratic $O(n^{2 - \epsilon})$-time solution to LCS with any constant $\epsilon > 0$ refutes
OVH.
Indeed, while there are a number of algorithms for computing LCS
whose running times are dependent on other parameters (e.g.~\cite{HuntS77,NakatsuKY82,ApostolicoBG92,Sakai12}), their worst-case time complexities remain $\Omega(n^2)$.

As previously stated, subsequence matching under OP-matching is NP-hard.
It is thus immediate that \emph{order-preserving longest common subsequence} (\emph{OP-LCS}) is also NP-hard.

These arguments pose the following natural question - 
Is the \emph{CT-LCS problem} also polynomial-time solvable?
Here, the CT-LCS problem is, given two strings $S$ and $T$ of length $n$,
to compute (the length) of a longest string $Q$
such that both $S$ and $T$ have subsequences that CT-match $Q$. 
We answer this question affirmatively, by presenting the following efficient algorithms and conditional lowerbound:
\begin{enumerate}
\item[(3)] an $O(n^6)$-time and $O(n^4)$-space CT-LCS algorithm for general ordered alphabets.
\item[(4)] an $O(n^2 / \log n)$-time and space CT-LCS algorithm
  for binary alphabets.
\item[(5)] an $O(n^{2-\epsilon})$-time conditional lower bound on the CT-LCS problem for alphabets of size 5, for any constant $\epsilon > 0$,
  under OVH.
\end{enumerate}  
  
While the $O(n^6)$-time solution in the general case (Result (3)) is based on
the idea of \emph{pivoted Cartesian-trees} from Oizumi et al.~\cite{OizumiKMIA22},
our $O(n^2 / \log n)$-time solution for the binary case (Result (4))
is built on a completely different approach that exploits interesting properties of CT-matching on binary strings, which we also use for the CT-MSeq problem.

\subsection{Organization}
The rest of this paper is organized as follows:
Section~\ref{sec:pre} gives basic notions.
In Section~\ref{sec:binary_property},
we present combinatorial properties of CT-matching on binary strings,
which will play central roles
for our CT-MSeq and CT-LCS algorithms presented respectively in
Sections~\ref{sec:binary_matching} and~\ref{sec:binary_LCS}.
\yynote*{changed}{In Section~\ref{sec:alg_on_binary_seq}, we present $O(nm)$-time solution for the CT-Mseq problem in the case of binary alphabets
  and our quadratic conditional lower bound for the CT-MSeq problem.
  In section~\ref{sec:ctlcs}, we present our $O(n^6)$-time solution and $O(n^2)$-time solution for the CT-LCS problem in the case of general ordered alphabets and binary alphabets, respectively,
  and quadratic conditional lower bounds for the CT-LCS problem.
}We conclude in Section~\ref{sec:concl}.

A preliminary version of this paper appeared in~\cite{TsujimotoSMNI24}.
The new materials in this full version are the following:
\begin{itemize}
\item Result (1): $O(nm)$-time CT-MSeq algorithm for binary alphabets.
\item Result (2): $O((nm)^{1-\epsilon})$-time conditional lower bound for the CT-MSeq problem.
\item Result (5): $O(n^{2-\epsilon})$-time conditional lower bound for the CT-LCS problem.  
\end{itemize}
 \section{Preliminaries}
\label{sec:pre}

\subsection{Basic notations: strings and vectors}
For any positive integer $i$, we define a set $[i] = \{1,\ldots, i\}$ of $i$ integers.
Let $\Sigma$ be an \emph{ordered alphabet} of size $\sigma$.
For simplicity, let $\Sigma = \{0, \ldots, \sigma-1\}$.
An element of $\Sigma$ is called a \emph{character}.
A sequence of characters is called a \emph{string}.
The \emph{length} of string $S$ is denoted by $|S|$.
The empty string $\varepsilon$ is the string of length $0$.
If $S = XYZ$, then $X$, $Y$, and $Z$ are respectively called
a \emph{prefix}, \emph{substring}, and \emph{suffix} of $S$.
For a string $S$, $S[i]$ denotes the $i$-th character of $S$ for each $i$ with $1 \le i \le |S|$.
For each $i,j$ with $1 \le i \le j \le |S|$, $S[i..j]$ denotes the substring of $S$ that begins at position $i$ and ends at position $j$.
For convenience, let $S[i.. j] = \varepsilon$ for $i > j$.
We write $\minst(S) = \min \{S[i] \;|\; i \in [n]\}$ for the minimum character contained in the string $S$.
For any $0 \le m \le n$, let  $\mathcal{I}_m^n$ be the set consisting of all \emph{subscript sequence} $I = (i_1, \ldots, i_m) \in [n]^m$
in ascending order satisfying $1 \le i_1 < \cdots < i_m \le n$.
For subscript sequence $I = (i_1, \ldots, i_m) \in \mathcal{I}_m^n$, we denote by $S_I = \str{S[i_1], \ldots, S[i_m]}$ the \emph{subsequence} of $S$ corresponding to $I$.
For a subscript sequence $I$ and its elements $i_s, i_t \in I$ with $i_s \le i_t$, 
$I[i_s:i_t]$ denotes  
the substring
of $I$ that starts with $i_s$ and ends with $i_t$.
For a sequence $T_1, \ldots, T_k$ of strings, we define $\prod_{i = 1}^{k} T_i = T_1\cdots T_k$.

For a $d$-dimensional binary vector $\alpha \in \{0, 1\}^d$ and an integer $1 \le t \le d$,
the $t$-th element in $\alpha$ is denoted by $\alpha[t]$.
For two vectors $\alpha, \beta \in \{0, 1\}^d$, they are said to be \emph{orthogonal}
if $\sum^d_{i=1}{\alpha[i]\beta[i]} = 0$ holds.

\subsection{Subsequence matching and LCS under Cartesian-tree equivalence}
For a string $S$, let $\minid(S)$ denote the least index $i$
such that $S[i]$ is the smallest element in $S$.
\begin{definition}[Cartesian-tree]
The \emph{Cartesian-tree} of a string $S$, denoted by $\CT(S)$, is the ordered binary tree recursively defined as follows:
If $S = \varepsilon$, then $\CT(S)$ is an empty tree, and otherwise,
$\CT(S)$ is the tree rooted at $i = \minid(S)$ such that
the left subtree of $i$ is $\CT(S[1.. i-1])$ and
the right subtree of $i$ is $\CT(S[i+1.. |S|])$.
 \label{def:ct}
\end{definition}
\yynote*{Added}{Fig.~\ref{fig:example_ct} shows $\CT(S)$ for a string $S=\str{\mt{5},\mt{8},\mt{2},\mt{6},\mt{4},\mt{7},\mt{4}}$ and $\CT(S_b)$ for a binary string $S_b=\str{\mt{1},\mt{0},\mt{0},\mt{1},\mt{0},\mt{1},\mt{0}}$.
}\yynote*{Added}{As shown in Fig.~\ref{fig:example_ct}, 
  the Cartesian tree recursively splits the string based on the smallest character without considering the ordering of characters between the left and right subtrees, 
  thereby expressing a flexible ordering structure of the characters.
}

\begin{figure}[h]
  \centering
  \includegraphics[keepaspectratio, scale=0.4]{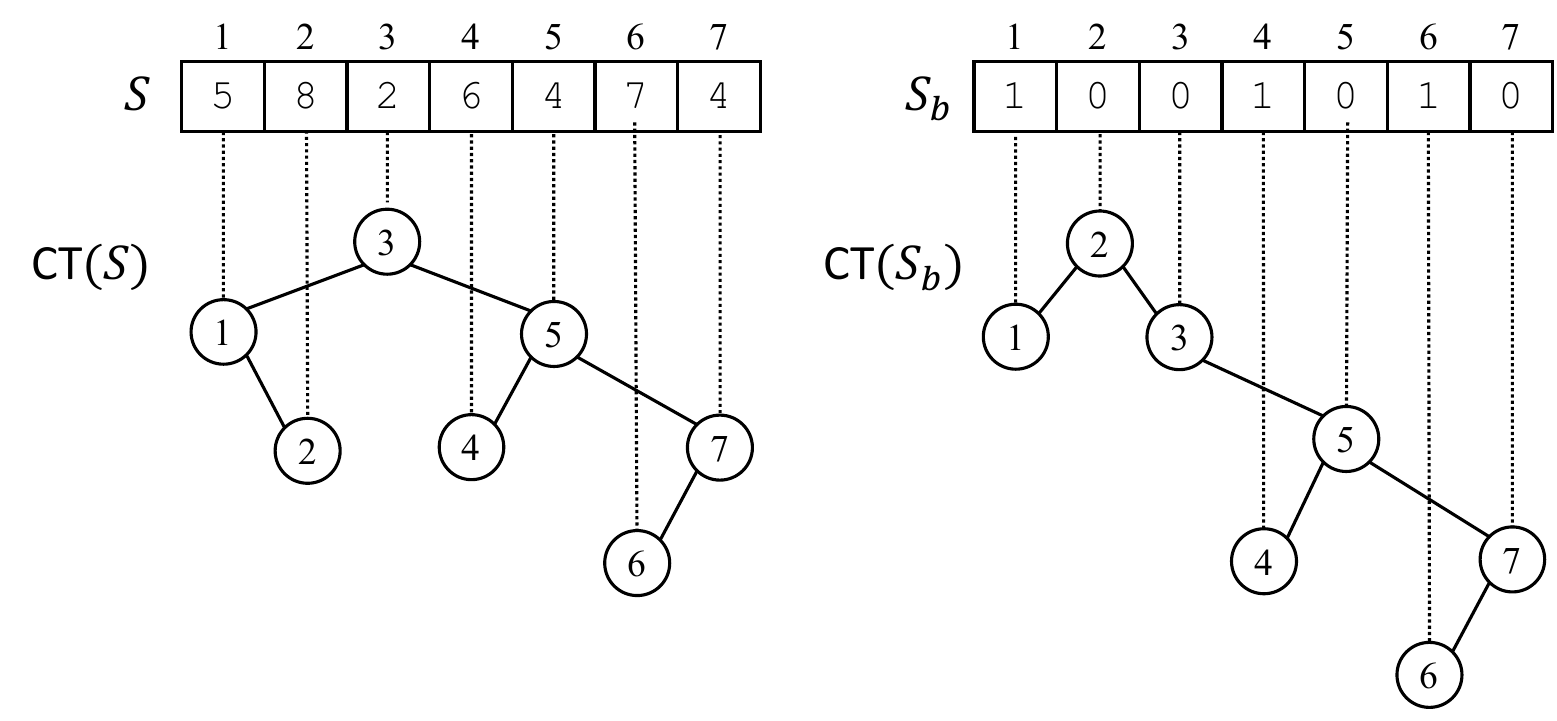}
  \caption{This figure shows examples of Cartesian-trees $\CT(S)$ for a string $S=\str{\mt{5},\mt{8},\mt{2},\mt{6},\mt{4},\mt{7},\mt{4}}$, 
  $\CT(S_b)$ for a binary string $S_b=\str{\mt{1},\mt{0},\mt{0},\mt{1},\mt{0},\mt{1},\mt{0}}$. 
  Each node in $\CT(S)$ and $\CT(S_b)$ are labeled by the corresponding position in $S$.
  Since $\minid(S_b)$ is the least index $i$ such that $S_b[i]$ is the smallest element in $S_b$,
  $\CT(S_b)$ is the tree rooted at $i = 2$.
  }
  \label{fig:example_ct}
\end{figure}

For two strings $S$ and $T$ of equal length,
the two Cartesian-trees $\CT(S)$ and $\CT(T)$ are \textit{isomorphic}
if they have the same tree topology as ordered trees~\cite{hoffmann1982pattern}, which are tree data structures where the children of each node are ordered.
We denote it by $\CT(S) = \CT(T)$.
We say that two strings $S$ and $T$ \emph{CT-match} if $\CT(S) = \CT(T)$.

\begin{definition}[CT-subsequence]
A string $P$ is said to be a \emph{CT-subsequence} of a string $T$
if there is a subsequence $Q$ of $T$ such that $\CT(Q) = \CT(P)$.
\end{definition}

\vspace{-8mm}

\begin{definition}[CT-occurrence interval]
    An interval $[\ell, r]$ in $T$ is said to be
  a \emph{CT-occurrence interval} for $P$
  if there is a subsequence $T[i_1] \cdots T[i_{|P|}]$ of $T[\ell..r]$
  such that $\CT(T[i_1] \cdots T[i_{|P|}]) = \CT(P)$,
  where $\ell \leq i_1 \leq \cdots \leq i_{|P|} \leq r$.
\end{definition}
A CT-occurrence interval $[\ell, r]$ for $P$ in $T$
is said to be \emph{minimal}
if no proper sub-interval of $[\ell,r]$ is a CT-occurrence interval for $P$ in $T$.  

Our first problem is the following:
\begin{problem}[CT-subsequence matching problem ({\CTMSeqprob})~\cite{OizumiKMIA22}] \label{prob:stmseq}
  Given two strings $T$ and $P$, 
  find all minimal CT-occurrence intervals for $P$ in $T$.
  \label{pro:CT-MSeq}
\end{problem}

\yynote{Added}{Fig.~\ref{fig:ct_matching} shows an example of CT-subsequence matching.
}\begin{figure}[h]
  \centering
  \includegraphics[keepaspectratio, scale=0.4]{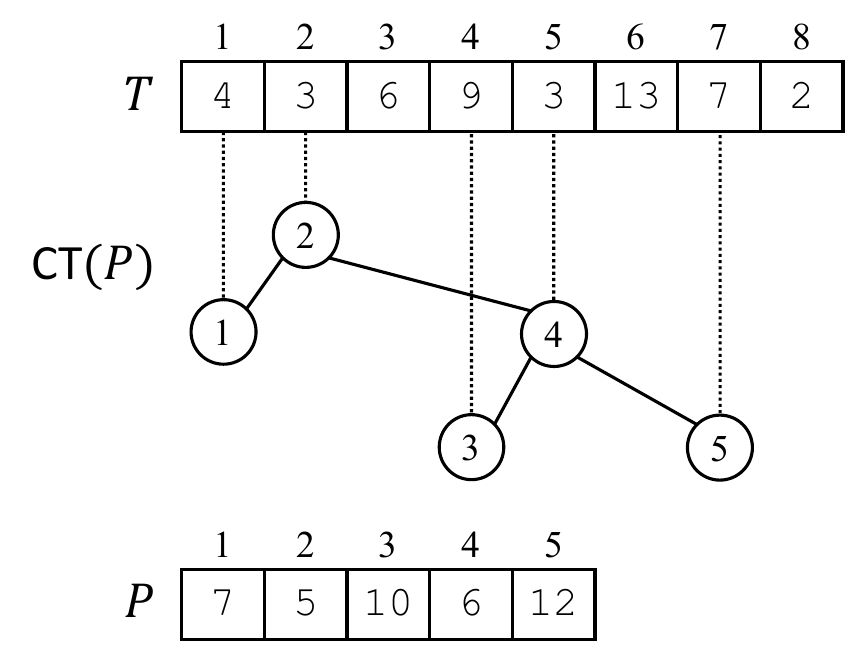}
  \caption{$P=\str{\mt{7},\mt{5},\mt{10},\mt{6},\mt{12}}$ is a CT-subsequence of $T=\str{\mt{4}, \mt{3}, \mt{6}, \mt{9}, \mt{6}, \mt{13}, \mt{7}, \mt{2}}$. Each node in $\CT(P)$ is labeled by the corresponding position in $P$.}
  \label{fig:ct_matching}
\end{figure}

A string $Q$ is said to be a \emph{common CT-subsequence} of two strings $S$
and $T$ if $Q$ is a CT-subsequence of both $S$ and $T$.
A string $Q$ is said to be a \emph{longest common CT-subsequence} (\emph{CT-LCS}) of $S$ and $T$ if there are no common CT-subsequences of $S$ and $T$ longer than $Q$.
We show an example of CT-LCS in Fig.~\ref{fig:ctlcs}.
The length of CT-LCS of strings $S$ and $T$ is denoted by $\ctlcs(S,T)$.

Our second problem is the following:
\begin{problem}[Longest common CT-subsequence problem ({\CTLCSprob})]
  Given two strings $S$ and $T$, compute $\ctlcs(S,T)$.
  \label{pro:CT-LCS}
\end{problem}

\begin{figure}[h]
  \centering
  \includegraphics[keepaspectratio, scale=0.4]{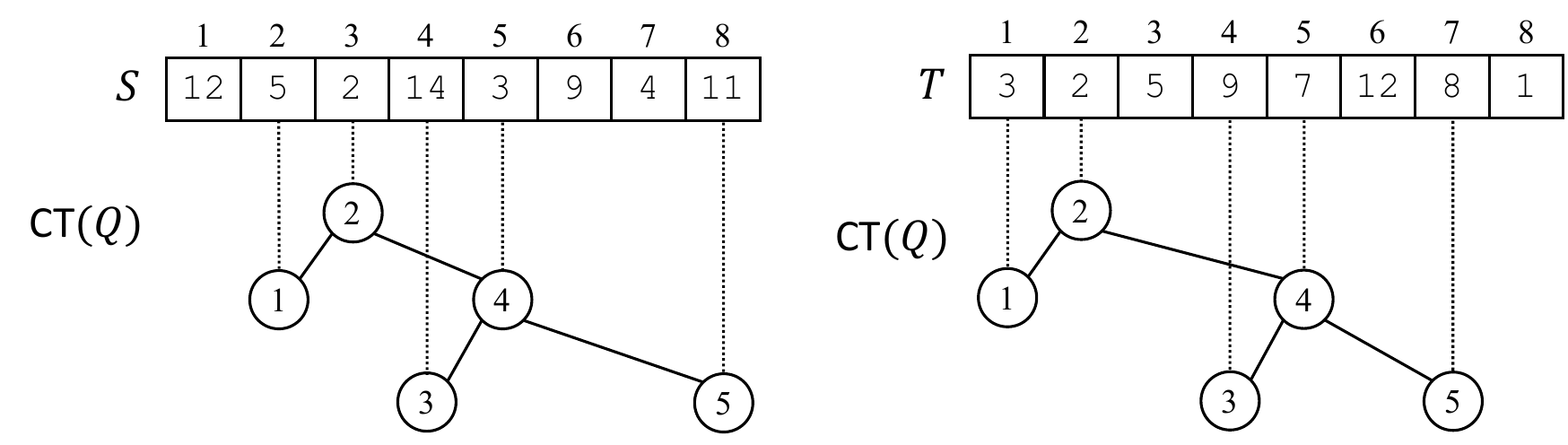}
  \caption{$Q=\str{\mt{5},\mt{2},\mt{14},\mt{3},\mt{11}}$ is a longest common CT-subsequence of $S=\str{\mt{12}, \mt{5}, \mt{2}, \mt{14}, \mt{3}, \mt{9}, \mt{4}, \mt{11}}$ and $T=\str{\mt{3}, \mt{2}, \mt{5}, \mt{9}, \mt{7}, \mt{12}, \mt{8}, \mt{1}}$. Each node in $\CT(Q)$ is labeled by the corresponding position in $Q$.}
  \label{fig:ctlcs}
\end{figure}

\subsection{Hardness assumptions}\label{sec:hardness}

Let us consider the well-known \emph{$k$-SAT problem}:
Given a propositional logic formula of conjunctive normal form (CNF) which has $n$ variables and at most $k$ literals in each clause,
determine whether there exists an interpretation that satisfies the input formula.
Impagliazzo and Paturi~\cite{ImpagliazzoP01} introduces the Strong Exponential Time Hypothesis (SETH),
which states that there is no algorithm that solves $k$-SAT inherently faster than exhaustive search.
\begin{itemize}
  \item {\bf Strong Exponential Time Hypothesis (SETH):}
For any $\epsilon > 0$, there exists $k \geq 3$ such that the $k$-SAT problem cannot be solved in $O(2^{(1-\epsilon)n})$ time.
\end{itemize}

Further, consider the Orthogonal Vector (OV) problem:
Given two sets of $n_A$ vectors $A$ and $n_B$ vectors $B$ such that $A,B \subset \{ \alpha \mid \alpha \in \{0,1\}^d \}$,
  determine whether there exist two vectors $\alpha, \beta \in A$ that are orthogonal.
Regarding the OV problem, the following hypotheses are considered in the literature:
\begin{itemize}
\item {\bf Orthogonal Vector Hypotheses (OVH):}
    For $n_A = n_B$, there are no $\epsilon > 0$ such that the OV problem can be solved in $O({n_A}^{2-\epsilon} \mathsf{poly}(d))$ time.
  \item {\bf Unbalanced Orthogonal Vector Hypotheses (UOVH):}
    Let $0 < \lambda \le 1$, there are no $\epsilon > 0$ such that the OV problem restricted to $n_B = \Theta({n_A}^\lambda)$ and $d \le n^{o(1)}$
    can be solved in $O((n_An_B)^{1-\epsilon})$ time.
\end{itemize}
It is known that SETH implies OVH and UOVH~\cite{williams2005new,BringmannK15}.
Thus, if OVH or UOVH is false, then SETH is also false.
 \section{Combinatorial properties of CT-matching on binary strings}
\label{sec:binary_property}

In this section, we exploit new properties of
CT-matching on binary strings,
which we will use for designing efficient algorithms
for {\CTMSeqprob} and {\CTLCSprob} on binary strings in Section~\ref{sec:binary_matching}
and Section~\ref{sec:binary_LCS},
respectively.

We first recall the \emph{parent-distance representation} presented by Park et al.~\cite{ParkBALP20}:
Given a string $S[1..n]$, the \textup{parent-distance representation} of $S$ is
  an integer string $\mathit{PD}(S)[1..n]$, which is defined as follows:
  \begin{linenomath}
    \begin{equation*}
      \mathit{PD}(S)[i] =
      \begin{cases} 
        i - \max_{1 \leq j < i}\{j \;|\; S[j] \leq S[i]\} \quad & \text{\rm{if such} } j \text{ \rm{exists}},\\   
        0                                               & \text{\rm{otherwise}}.
      \end{cases}
    \end{equation*}      
  \end{linenomath}

For example, the parent-distance representation of string $S = \str{\mt{1},\mt{0},\mt{1},\mt{1},\mt{0},\mt{0},\mt{1}}$
is $\mathit{PD}(S) = \str{\mt{0},\mt{0},\mt{1},\mt{1},\mt{3},\mt{1},\mt{1}}$.

\begin{lemma}[\cite{ParkBALP20}] \label{thm:pd_and_ct}
  Two strings $S_1$ and $S_2$ CT-match if and only if $S_1$ and $S_2$ have the same
  parent-distance representations.
\end{lemma}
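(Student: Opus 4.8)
The plan is to reduce the biconditional to a single structural lemma that reads the value $\mathit{PD}(S)[i]$ directly off the shape of $\CT(S)$, after which both implications come almost for free. Concretely, for a position $i$ let $a(i)$ denote the nearest proper ancestor of $i$ in $\CT(S)$ whose index is smaller than $i$ (the nearest \emph{left-ancestor}), if one exists. The claim I would establish first is
\[
  \mathit{PD}(S)[i]=
  \begin{cases}
    i-a(i) & \text{if } a(i) \text{ exists,}\\
    0 & \text{otherwise.}
  \end{cases}
\]
Equivalently, $i-\mathit{PD}(S)[i]=\max\{\,j<i \mid S[j]\le S[i]\,\}$ is exactly the nearest smaller-or-equal predecessor of $i$, and this predecessor is precisely $a(i)$.

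First I would prove this lemma by induction on $|S|$ using the recursive definition of $\CT$. Let $r=\minid(S)$ be the root. For $i=r$ there is no ancestor, and $S[r]$ is the first occurrence of $\minst(S)$, so no $j<r$ satisfies $S[j]\le S[r]$ and both sides are $0$. For $i<r$, the only ancestor of $i$ with larger index is $r$ itself, a right-ancestor and hence irrelevant; every left-ancestor of $i$ lies inside $\CT(S[1..r-1])$, and since the nearest smaller-or-equal predecessor of such an $i$ also lies in $S[1..r-1]$, the statement follows from the induction hypothesis on the left part. The delicate case is $i>r$, where I would split on whether some position strictly between $r$ and $i$ has value $\le S[i]$. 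If so, that position is the nearest left-ancestor and the claim reduces to the induction hypothesis on $S[r+1..n]$; if not, then $i$ sits on the left spine of the right subtree, its nearest left-ancestor is $r$, and one checks $\max\{j<i\mid S[j]\le S[i]\}=r$ because $S[r]=\minst(S)\le S[i]$ while all positions in $(r,i)$ exceed $S[i]$ and all positions below $r$ are farther. Throughout, the tie-break built into $\minid$ (first occurrence of the minimum) is what makes the inequality ``$\le$'' in the definition of $\mathit{PD}$ agree with ancestry for equal characters.

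Given the lemma, the forward implication is immediate: the index of a node equals its in-order rank in the ordered tree, and ancestry is a purely topological relation, so the quantities ``$i$'', ``whether $a(i)$ exists'', and ``$i-a(i)$'' all depend only on the topology of $\CT(S)$. Hence $\CT(S_1)=\CT(S_2)$ forces $\mathit{PD}(S_1)=\mathit{PD}(S_2)$. For the converse I would show that $\mathit{PD}(S)$ alone lets us reconstruct $\CT(S)$: the root is the largest index $i$ with $\mathit{PD}(S)[i]=0$ (the positions with value $0$ are exactly the strict prefix-minima, i.e.\ the left spine, and the root is their largest-index member), the left part is encoded by the restriction $\mathit{PD}(S)[1..r-1]$, and the right part is recovered by restricting to $[r+1..n]$ and zeroing out every position whose predecessor pointer reaches an index $\le r$. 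Recursing reconstructs $\CT(S)$ from $\mathit{PD}(S)$ deterministically, so $\mathit{PD}(S_1)=\mathit{PD}(S_2)$ forces $\CT(S_1)=\CT(S_2)$.

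The main obstacle I anticipate is the right-subtree case of the structural lemma: one must argue carefully that the global nearest smaller-or-equal predecessor either stays inside the right subtree or jumps exactly to the root $r$ (never to the left part), and that this dichotomy coincides with being off or on the left spine of the right subtree. Keeping the ``$\le$'' versus ``$<$'' convention consistent with the $\minid$ tie-break is the easy-to-slip detail; once the lemma is nailed down, both directions of the equivalence are short.
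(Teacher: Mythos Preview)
The paper does not give its own proof of this lemma: it is quoted verbatim from Park et al.\ \cite{ParkBALP20} and used as a black box, so there is no in-paper argument to compare against.

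That said, your plan is sound and would constitute a self-contained proof. The structural lemma you isolate---that $\mathit{PD}(S)[i]=i-a(i)$ where $a(i)$ is the deepest ancestor of $i$ with index smaller than $i$---is exactly the right bridge, and your inductive case split for $i>r$ (predecessor stays in the right subtree vs.\ jumps to the root $r$) is correct; the point that $S[r]=\minst(S)$ guarantees the global predecessor never lands strictly left of $r$ is what makes the dichotomy clean. The forward implication is then indeed purely topological, and your reconstruction for the converse is right: the positions with $\mathit{PD}(S)[i]=0$ are the strict prefix minima, the largest-index one among them is $\minid(S)$, and the restriction/zeroing rule recovers the parent-distance arrays of the two halves. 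The one place to be careful in a full write-up is the tie-break: because $\minid$ takes the \emph{leftmost} minimum while $\mathit{PD}$ uses ``$\le$'', equal values to the left count as predecessors and correspondingly become ancestors---you note this, and it is the only subtlety.
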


Lemma~\ref{thm:pd_and_ct} allows for determining whether two strings CT-match or not.
We will only use this representation to guarantee the correctness of our algorithms for the binary case, and do not explicitly compute it.

We start from a simple observation of CT-matching on binary strings.
The following observation and lemmas will support our algorithm for the binary alphabet $\{0, 1\}$.
\begin{observation} \label{obs:nondecreasing}
  For any non-empty string $S$,
  $\mathit{PD}(S) = 01^{|S|-1}$ if and only if $S$ is non-decreasing.
  Namely, a non-decreasing sequence CT-matches only a non-decreasing sequence.
\end{observation}
Observation~\ref{obs:nondecreasing} implies the following lemma:
\begin{lemma} \label{lem:only1_ct_match}
  Let $S_1 = 1^n$ and $S_2$ be a binary string of length $n$.
Then, $S_1$ and $S_2$ CT-match if and only if $S_2 = 0^i1^{n-i}$ for some integer $i \geq 0$.
\end{lemma}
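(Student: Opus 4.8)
The plan is to prove both directions using the parent-distance representation (Lemma~\ref{thm:pd_and_ct}), since CT-matching is equivalent to having identical $\mathit{PD}$ strings. First I would compute $\mathit{PD}(S_1)$ for $S_1 = 1^n$. Since every character equals $\mt{1}$, for each position $i > 1$ the nearest previous position $j < i$ with $S_1[j] \le S_1[i]$ is simply $j = i-1$, so $\mathit{PD}(S_1)[i] = 1$; and $\mathit{PD}(S_1)[1] = 0$ because no previous position exists. Thus $\mathit{PD}(S_1) = \str{\mt{0}, \mt{1}, \mt{1}, \ldots, \mt{1}}$, i.e.\ a $\mt{0}$ followed by $n-1$ copies of $\mt{1}$.

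For the ``if'' direction I would take $S_2 = 0^i 1^{n-i}$ and verify that $\mathit{PD}(S_2)$ equals the string just computed. Position $1$ gives $\mt{0}$ in either case ($i \ge 1$ makes $S_2[1]=\mt{0}$; $i=0$ makes $S_2 = S_1$, which is immediate). For any position $k$ with $2 \le k \le i$ (still in the $\mt{0}$-block), we have $S_2[k] = \mt{0}$ and $S_2[k-1] = \mt{0} \le \mt{0}$, so $\mathit{PD}(S_2)[k] = 1$. For a position $k$ in the $\mt{1}$-block, $S_2[k] = \mt{1}$ and $S_2[k-1] \le \mt{1}$ always holds (both $\mt{0}$ and $\mt{1}$ are $\le \mt{1}$), so again $\mathit{PD}(S_2)[k] = 1$. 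Hence $\mathit{PD}(S_2) = \mathit{PD}(S_1)$ and the two strings CT-match.

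For the ``only if'' direction, the key structural fact to establish is that $\mathit{PD}(S_2) = \mathit{PD}(S_1)$ forces $\mathit{PD}(S_2)[1] = 0$ and $\mathit{PD}(S_2)[k] = 1$ for all $k \ge 2$; I then argue that this pattern is incompatible with any $\mt{1}$ appearing before a $\mt{0}$. Concretely, suppose for contradiction that $S_2$ is \emph{not} of the form $0^i 1^{n-i}$; then there is some position $k$ with $S_2[k-1] = \mt{1}$ and $S_2[k] = \mt{0}$ (a $\mt{1}$ immediately preceding a $\mt{0}$, which must occur since the string is not sorted into a $\mt{0}$-block followed by a $\mt{1}$-block). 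The main obstacle, and the one point requiring care, is confirming that no earlier position can supply parent-distance $1$ in a way that rescues the match: I would show that $\mathit{PD}(S_2)[k] = 1$ requires $S_2[k-1] \le S_2[k] = \mt{0}$, i.e.\ $S_2[k-1] = \mt{0}$, directly contradicting $S_2[k-1] = \mt{1}$. This contradiction shows $S_2$ must already be of the claimed form, completing the proof.
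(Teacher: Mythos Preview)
Your proposal is correct and follows essentially the same approach as the paper: both compute $\mathit{PD}(1^n)=01^{n-1}$ and then use Lemma~\ref{thm:pd_and_ct} to conclude that $S_2$ must be a non-decreasing binary string. The paper merely states this in one sentence after the lemma, whereas you spell out both directions explicitly; the underlying argument is identical.
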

From now on, we discuss the case where $0$ appears in both $S_1$ and $S_2$.

\begin{lemma}
\label{lem:bi_ct_match}
  For two binary strings $S_1$ and $S_2$ of length $n$ both containing $0$, 
  $S_1$ and $S_2$ CT-match if and only if there exist a string $w$ and two integers $i \geq 1$, $j \geq 1$ 
  such that $S_1 = w0^i1^{n-|w|-i}$ and $S_2 = w0^j1^{n-|w|-j}$.
\end{lemma}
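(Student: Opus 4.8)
The plan is to rely on the parent-distance characterization (Lemma~\ref{thm:pd_and_ct}): it suffices to compare $\mathit{PD}(S_1)$ and $\mathit{PD}(S_2)$. The key preliminary observation is that for a binary string $S$ the value $\mathit{PD}(S)[p]$ takes an especially simple form. If $S[p]=1$, then every earlier character is $\le 1$, so $\mathit{PD}(S)[p]=1$ (or $0$ when $p=1$); and if $S[p]=0$, then $\mathit{PD}(S)[p]$ is the distance from $p$ back to the previous $0$ (or $0$ if $p$ is the first $0$). I will use this simple form throughout.

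For the ``if'' direction, assume $S_1 = w0^i1^{n-m-i}$ and $S_2 = w0^j1^{n-m-j}$ with $m=|w|$ and $i,j\ge 1$, and I compute both parent-distance strings. On positions $1,\dots,m$ the value $\mathit{PD}[p]$ depends only on characters strictly before $p$, hence only on $w$, so the two agree there. At position $m+1$ both strings have a $0$ (here $i,j\ge 1$ is essential), and $\mathit{PD}[m+1]$ equals the distance back to the last $0$ inside $w$ (or $0$ if $w$ has none); this again depends only on $w$, so the two agree. Finally, at every position in $m+2,\dots,n$ the character is either a $0$ immediately preceded by a $0$ or a $1$, so the value is $1$ in both strings. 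Hence $\mathit{PD}(S_1)=\mathit{PD}(S_2)$ and the strings CT-match.

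For the ``only if'' direction I will argue by induction on $n$. Since both strings contain $0$, the root of each Cartesian tree is the first $0$, say at positions $a_1$ and $a_2$; the left subtree of the root is $\CT(1^{a_k-1})$, which is a right path on $a_k-1$ nodes. As an isomorphism of ordered trees preserves the size of the left subtree of the root, $a_1=a_2=:a$, so $S_1$ and $S_2$ share the prefix $1^{a-1}0$, and their right parts $R_1=S_1[a+1..n]$ and $R_2=S_2[a+1..n]$ satisfy $\CT(R_1)=\CT(R_2)$. If $R_1,R_2$ both contain a $0$, the induction hypothesis yields a common prefix $w'$ with each $R_k$ monotone after $w'$, and I set $w=1^{a-1}0w'$. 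If neither contains a $0$, then $R_1=R_2=1^{n-a}$ and $S_1=S_2=1^{a-1}01^{n-a}$, for which $w=1^{a-1}$ and $i=j=1$ work. The remaining case, where exactly one of $R_1,R_2$ is all $1$s, is where I expect the real work: here one right part equals $1^{n-a}$ while the other contains a $0$ yet CT-matches $1^{n-a}$, so Lemma~\ref{lem:only1_ct_match} forces the latter to be $0^i1^{n-a-i}$; taking $w=1^{a-1}$ then yields $S_1=w0^{i+1}1^{\cdots}$ and $S_2=w0^{1}1^{\cdots}$ (or vice versa), both with a nonempty zero-run.

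The main obstacle is precisely this ``only if'' analysis: establishing that the two first-$0$ positions coincide (via the right-path shape of the left subtree), and then keeping the common prefix $w$ consistent across the recursion while correctly handling the degenerate branch in which one suffix has become all $1$s. Lemma~\ref{lem:only1_ct_match} is exactly the tool that closes that degenerate branch, and the simple binary form of $\mathit{PD}$ is what makes the ``if'' direction a direct computation.
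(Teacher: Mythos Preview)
Your argument is correct. The ``if'' direction is essentially the same parent-distance computation the paper carries out. The ``only if'' direction, however, is genuinely different from the paper's proof.

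The paper works entirely with $\mathit{PD}$: it lets $p$ and $q$ be the starting positions of the maximal monotone suffixes $0^{\cdots}1^{\cdots}$ of $S_1$ and $S_2$, shows $p=q$ by a $\mathit{PD}$-value contradiction at position $\max\{p,q\}$, and then shows $S_1[1..p-1]=S_2[1..p-1]$ by looking at the rightmost disagreement $j^\ast$ and deriving a $\mathit{PD}$ mismatch at either the next $0$ or at $p$. No induction is used.

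You instead exploit the recursive shape of the Cartesian tree: the root is the first $0$, the left subtree is a right-path whose length is determined by the tree shape, so the first-$0$ positions coincide; then you recurse on the right parts, closing the degenerate branch with Lemma~\ref{lem:only1_ct_match}. This is a clean structural proof that avoids the somewhat fiddly $\mathit{PD}$ case analysis in the paper; the price is that you must manage an induction and a three-way case split on whether the right suffixes still contain a $0$. Both proofs are of comparable length; yours is arguably more conceptual (it reads the statement directly off the tree), while the paper's is self-contained once one accepts Lemma~\ref{thm:pd_and_ct}. One minor point: you should state the base case explicitly (length-$1$ strings equal to $0$), even though it is trivial.
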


\begin{proof}
  ($\Longrightarrow$) Since $S_1$ and $S_2$ CT-match, $\mathit{PD}(S_1)=\mathit{PD}(S_2)$ (by Theorem~\ref{thm:pd_and_ct}).
  Let $p$ (resp., $q$) be the smallest integer
  such that $S_1[p..n] = 0^i1^j$ (resp., $S_2[q..n] = 0^i1^j$) for some $i \geq 1$, $j \geq 0$.
  Assume that $1 \leq q < p$.
  Since $S_1[p-1]=1$ and $S_1[p]=0$, either $\mathit{PD}(S_1)[p]=0$ or $\mathit{PD}(S_1)[p]>1$ holds 
  (i.e., $\mathit{PD}(S_1)[p] \neq 1$).
  On the other hand, $\mathit{PD}(S_2)[p] = 1$ holds since $S_2[p-1] \leq S_2[p]$.
  Then $\mathit{PD}(S_1)[p] \neq \mathit{PD}(S_2)[p]$, which is a contradiction.
  By a similar discussion, $1 \leq p < q$ also leads a contradiction.
  Next, we assume that $p = q = 1$.
  This assumption implies that the statement holds since $w = \varepsilon$.
  Assume that $1 < p = q$.
  Suppose on the contrary that $S_1[1..p-1] \neq S_2[1..p-1]$.
  There exists an integer $j^*$ such that $j^* = \max_{1 \leq j \leq p-1}\{j \mid S_1[j] \neq S_2[j]\}$.
  This means that $S_1[j^*+1..p-1] = S_2[j^*+1..p-1]$.
  We assume w.l.o.g. that $S_1[j^*] = 0$ and $S_2[j^*] = 1$
  (the other case is symmetric).
  \begin{itemize}
    \item If $0$ does not appear in $S_1[j^*+1..p-1]$, $\mathit{PD}(S_1)[p] = p-j^* > 0$ holds.
    On the other hand, either $\mathit{PD}(S_2)[p] = 0$ or $\mathit{PD}(S_2)[p] > p-j^*$ holds.
    Thus $\mathit{PD}(S_1) \neq \mathit{PD}(S_2)$, which is a contradiction.
  \item If $0$ appears in $S_1[j^*+1..p-1]$, 
    there exists an integer $i^*$ such that
    $i^{*} = \min_{j^*+1 \leq j \leq p-1}\{j \mid S_1[j] = S_2[j] = 0\}$ holds.
    This implies that $\mathit{PD}(S_1)[i^{*}]=i^{*}-j^* > 0$ holds.
    On the other hand, either $\mathit{PD}(S_2)[i^*] = 0$ or $\mathit{PD}(S_2)[i^*] > i^*-j^*$ holds.
    Thus $\mathit{PD}(S_1) \neq \mathit{PD}(S_2)$, which is a contradiction.
  \end{itemize}
  Therefore, $S_1 = w0^i1^{n-p+1-i}$ and $S_2 = w0^j1^{n-p+1-j}$ hold for some integers $i, j$
  where $w =S_1[1..p-1]=S_2[1..p-1]$.

  \noindent ($\Longleftarrow$) If $w = \varepsilon$, it is clear that $S_1$ and $S_2$ CT-match.
  We consider the case of $w \neq \varepsilon$.
  We show that $\mathit{PD}(S_1) = \mathit{PD}(S_2)$ holds, which is suffice due to Lemma~\ref{thm:pd_and_ct}.
It is easy to see that $\mathit{PD}(S_1)[i] = \mathit{PD}(S_2)[i]$ for all $i$ that satisfies $1 \leq i \leq |w|$.
  Moreover, $\mathit{PD}(S_1)[i] = \mathit{PD}(S_2)[i] = 1$ also holds for all $i$ with $|w|+1 < i \leq n$.
  If $0$ does not appear in $w$, $\mathit{PD}(S_1)[|w|+1] = \mathit{PD}(S_2)[|w|+1] = 0$.
  We assume that $0$ appears in $w$ for the remaining case.
  Let $j^* = \max_{1 \leq j \leq |w|}\{j \mid w[j]=0\}$.
  Since $S_1[|w|+1] = S_2[|w|+1] = 0$ and $S_1[1..|w|] = S_2[1..|w|]$,
  $\mathit{PD}(S_1)[|w|+1] = \mathit{PD}(S_2)[|w|+1]$ holds. 
  Therefore $\mathit{PD}(S_1) = \mathit{PD}(S_2)$.
\end{proof}
 \section{CT-subsequence matching problems}
\label{sec:alg_on_binary_seq}

We propose a quadratic algorithm for solving {\CTMSeqprob} for the binary alphabet $\{0, 1\}$ in Section~\ref{sec:binary_matching}, and
give a conditional lower bound for {\CTMSeqprob} for a general ordered alphabet of size four in Section~\ref{sec:lb_ctmseq}.
Throughout this section, 
we assume that the text $T$ of length $n$ and the pattern $P$ of length $m \leq n$ are binary strings
and discard the assumption that
all characters are distinct in $T$ and in $P$.

\subsection{Algorithm for binary {\CTMSeqprob}}
\label{sec:binary_matching}
Our algorithm for {\CTMSeqprob} on binary strings is based on Lemma~\ref{lem:only1_ct_match} and Lemma~\ref{lem:bi_ct_match}.
Firstly, we present some definitions to describe our algorithm.
For any integers $\ell \in [m]$ and $j \in [n]$,
let $\ND(\ell, j)$ be the largest integer $k$ such that
$T[k..j]$ contains a non-decreasing subsequence of length $\ell$
and $\ND_0(\ell, j)$ be the largest integer $k$ such that
$T[k..j]$ contains a non-decreasing subsequence of length $\ell$ \emph{including at least one $0$'s}.
Further, for a string $w$ and integers $i, j$ with $i\in [|w|]$ and $j\in [n]$,
let $\Lw(i,j)$ be the largest integer $k$ such that $T[k..j]$ contains $w[1..i]$ as a subsequence.
If no such $k$ exists, let the values of $\ND(\ell, j)$, $\ND_0(\ell, j)$, and $\Lw(i,j)$ be $0$, respectively.
For convenience, we define $\ND(\ell, 0) = \ND_0(\ell, 0) = \Lw(i, 0) = 0$ for any $\ell, i$, and $w$.
By the definition of $\ND$,
string $T[\ND(\ell, j)..j]$ is a minimal substring which contains a non-decreasing subsequence of length $\ell$
if and only if $\ND(\ell, j-1)< \ND(\ell, j)$ holds.
Similarly,
string $T[\ND_0(\ell, j)..j]$ is a minimal substring which contains a non-decreasing subsequence of length $\ell$ including at least one $0$'s
if and only if $\ND_0(\ell, j-1)<\ND_0(\ell, j)$ holds.
Let $q=\min\{i \in [m] \mid P[i..m] \text{ is a non-decreasing sequence}\}$.
We also define $\mathcal{M}$ as the set of all minimal CT-occurrence intervals for $P$ over $T$.
If $q=1$, i.e., $P[1..m]$ is a non-decreasing sequence,
by Observation~\ref{obs:nondecreasing},
$\mathcal{M}$ is the set of all minimal intervals $[\ell, r]$ such that $T[\ell.. r]$ contains a non-decreasing subsequence of length $m=|P|$.
If $q>1$, by Lemma~\ref{lem:bi_ct_match},
$\mathcal{M}$ is the set of all minimal intervals $[\ell, r]$
such that $T[\ell.. r]$ contains $w0^i1^{m-|w|-i}$ for some $i \ge 1$ as a subsequence where $w=P[1..q-1]$.
Note that the longest non-decreasing suffix $P[q.. m]$ of $P$ starts with character $0$ if $q > 1$.
Thus, the following observation holds for $\mathcal{M}$~(see Fig.~\ref{fig:ctmseq} for the case of $q > 1$):
\begin{observation}\label{obs:M}
  If $q=1$, $\mathcal{M}=\{[\ell, r] \mid \ell=\ND(m, r), r\in R\}$ holds, 
  where $R=\{j \in [n] \mid \ND(m, j-1)< \ND(m, j)\}$. 
  If $q>1$, let $\mathcal{M}'=\{[\ell, r] \mid \ell=\Lw(|w|, \ND_0(m-|w|, r)-1), r\in [n]\}$ where $w=P[1..q-1]$.
  Then, $\mathcal{M}=\{[\ell, r] \in \mathcal{M}' \mid \text{there is no interval } [\ell', r']\in \mathcal{M}' \text{ such that }[\ell', r'] \subsetneq [\ell, r]\}$ holds.
\end{observation}
\begin{figure}[hb]
  \centering
  \includegraphics[keepaspectratio, scale=0.4]{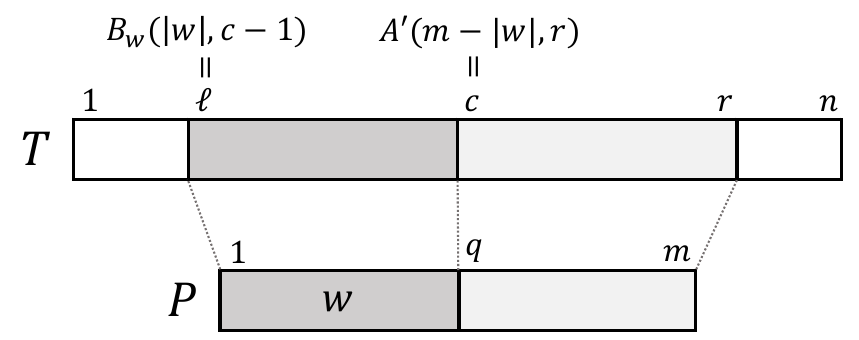}
  \caption{Illustration for an element $[\ell, r]$ of $\mathcal{M}'$.
    By Lemma~\ref{lem:bi_ct_match}, if $P$ is a CT-subsequence of a substring of $T$ ending at some position $r \in [n]$, then
    $P[q.. m]$, a non-decreasing suffix of $P$ including $0$, is a CT-subsequence of $T[c.. r]$ where $c = ND_0(m-q+1, r)$ and 
    $P[1.. q-1]$                                              is a    subsequence of $T[\Lw(|w|, c-1).. c-1]$ where $w = P[1.. q-1]$.
    Note that $[\ell, r]$ is not necessarily a minimal CT-occurrence interval for $P$ over $T$.
  }
  \label{fig:ctmseq}
\end{figure}

Now we describe an algorithm based on the above discussion which runs in $O(nm)$ time with $O(n)$ space.

\paragraph*{High-level description}
First, we compute $q=\min\{i \in [m] \mid P[i..m] \text{ is a non-decreasing sequence}\}$ in $O(m)$ time.
If $q=1$,
we compute the values of $\ND(m, j)$ for all integers $j \in [n]$.
By Observation~\ref{obs:M}, $\mathcal{M}$ can be computed in $O(n)$ time from the values of $\ND(m, \cdot)$.

If $q>1$, let $w=P[1..q-1]$.
We compute the values of $\ND_0(m-|w|, j)$ and $\Lw(|w|, j)$ for all integers $j \in [n]$.
By Observation~\ref{obs:M}, $\mathcal{M}'$ can be computed in $O(n)$ time from the values of $ND_0(m-|w|, \cdot)$ and $\Lw(|w|, \cdot)$.
Then, we can obtain $\mathcal{M}$ by removing non-minimal intervals from $\mathcal{M}'$.
Since every interval in $\mathcal{M}'$ is a sub-interval of $[1, n]$,
we can remove such non-minimal intervals in $O(n)$ time by using bucket sort.

Therefore, in each of the cases $q=1$ and $q>1$,
the remaining task is to compute the values of $\ND(m, j)$, $\ND_0(m-|w|, j)$ and $\Lw(|w|, j)$ for all $j \in [n]$.

\paragraph*{Computing \texorpdfstring{$\ND$}{ND}}
Let $\mathit{prev}_0(j)=\max(\{i < j \mid T[i]=0\}\cup\{0\})$.
The values of $\mathit{prev}_0(j)$ for all $j \in [n]$ can be computed in $O(n)$ time by scanning $T$ from right to left.
Let $Z(\ell, j)$ be the staring position of the longest suffix of $T[1.. j]$ that contains $0^\ell$ as a subsequence.
If no such suffix exists, let $Z(\ell, j)=0$.
The values of $Z$ can be computed in $O(nm)$ time by using the following recurrence:
\begin{align*}
  Z(\ell, j)&= \left\{ 
    \begin{alignedat}{2}   
        &Z(\ell-1, \mathit{prev}_0(j))&\quad&\text{if } T[j] = 0, \\   
        &Z(\ell, j-1)                 &\quad&\text{if } T[j] = 1.
    \end{alignedat} 
  \right.
\end{align*}

Then, we show how to compute $\ND$ (along with $Z$).
We first initialize $\ND(\ell, 0)=0$ for all $\ell \in [m]$ and
$\ND(1, j)=j$ for all $j \in [n]$.
Then, we can compute $\ND(m, j)$ for incremental $\ell = 1, 2, \ldots, m$ and $j = 1, 2, \ldots, n$
by using the following recurrence:
\begin{align*}
  \ND(\ell, j)&= \left\{ 
    \begin{alignedat}{2}   
        &\max\{Z(\ell, j), \ND(\ell, j-1)\}      &\quad&\text{if } T[j] = 0, \\   
        &\ND(\ell-1, j-1)                        &\quad&\text{if } T[j] = 1.
    \end{alignedat} 
  \right.
\end{align*}

We describe the above recurrence in detail in below.
Let $k = \ND(\ell, j) > 0$.
By the definition of $\ND$, there exists a subscript sequence $I(k, j_1, \ldots , j_{\ell-1})$ satisfying $k<j_1<\cdots<j_{\ell-1}\le j$
such that $T[k..j]$ contains a non-decreasing subsequence $T_I$.
If $T[j]=0$ and $j_{\ell-1}=j$, then $T[j_{\ell-1}]=0$ holds, and thus, $T_I=0^{\ell}$.
Hence, $k=Z(\ell, j)$ holds.
If $T[j] = 0$ and $j_{\ell-1}<j$, then $T[k..j-1]$ also contains $T_I$ as a subsequence.
Thus, $k=\ND(\ell, j-1)$ holds.
Finally, if $T[j]=1$, then $j_{\ell-1}=j$ always holds, and thus, $T[k..j-1]$ contains a non-decreasing subsequence of length $\ell-1$.
This leads to $k=\ND(\ell-1, j-1)$.
Therefore, the above recurrence for $\ND$ holds.

The time complexity of computing values of $\ND$ is $O(nm)$.
Although $\ND$ has $\Theta(nm)$ values,
the working space can be reduced to $O(n)$
since we only refer to the $\ell$th or the $(\ell-1)$th rows of $\ND$ and $Z$
when computing the $\ell$th rows of $\ND$ and $Z$, respectively.

\paragraph*{Computing \texorpdfstring{$\ND_0$}{ND_0}}
By the definitions of $\ND$ and $\ND_0$, The following relation holds. \begin{align*}
  \ND_0(\ell, j)= \left\{ 
    \begin{alignedat}{2}
          &\ND(\ell, j)      &\quad&\text{if } T[\ND(\ell, j)]=0,\\    
          &\mathit{prev}_0(\ND(\ell, j)) &\quad&\text{\rm{otherwise}}.   
    \end{alignedat} 
  \right.
\end{align*}
This relation implies that, for every $j \in [n]$,
if we already have $\ND(m-|w|, j)$, then $\ND_0(m-|w|, j)$ can be obtained $O(1)$ time.
Thus, we can compute the values of $\ND_0(m-|w|, j)$ for all $j \in [n]$ in $O(nm)$ time using $O(n)$ space.

\paragraph*{Computing \texorpdfstring{$\Lw$}{Lw}}
We note that the definition of $\Lw$ is equivalent to
the definition of the table, named $S$, used in~\cite{DasFGGK97} for the episode matching problem.
Thus, we can use the same recurrence as in~\cite{DasFGGK97} to compute $\Lw$:
\begin{align*}
  \Lw(i, j)= \left\{ 
    \begin{alignedat}{2}   
          &\Lw(i-1, j-1)                       &\quad&\text{if } w[i]=T[j], \\   
          &\Lw(i, j-1)                         &\quad&\text{\rm{otherwise}},
    \end{alignedat} 
  \right.
\end{align*}
with the initialization $\Lw(i, 0)=0$ for each $i \in [|w|]$ and $\Lw(0, j)=j+1$ for each $0 \leq j \leq n$.
We can obtain $\Lw(|w|, j)$ for all $j \in [n]$ in $O(n|w|) \subseteq  O(nm)$ time using $O(n)$ space~\cite{DasFGGK97}.

From the above discussion, we can compute the values of $\ND(m, j)$, $\ND_0(m-|w|, j)$ and $\Lw(|w|, j)$ for all $j \in [n]$ in $O(nm)$ time and $O(n)$ space.
Therefore, we obtain the following theorem:
\begin{theorem}
  \label{thm:ctmseq_bi_simple}
  {\CTMSeqprob} on binary strings
  can be solved in $O(nm)$ time and $O(n)$ space.
\end{theorem}

 \subsection{Conditional lower bound for {\CTMSeqprob}} \label{sec:lb_ctmseq}
In this section, we prove the following theorem. 
\begin{theorem}\label{th:ov2ctm}
  If there is an $\epsilon > 0$ such that {\CTMSeqprob} for an alphabet of size $4$
  can be solved in $O((|T||P|)^{1-\epsilon})$ time, then OVH and SETH are false.
\end{theorem}

\yynote*{added}{We show the proof of Theorem~\ref{th:ov2ctm} in the end of this section.
}To prove Theorem~\ref{th:ov2ctm},
we reduce the OV problem to the \emph{shortest-CT-MSeq} problem~({\SCTMSeqprob}) defined as follows, which is a relaxed version of {\CTMSeqprob}.

\begin{problem}[Shortest Cartesian-Tree Subsequence Matching problem ({\SCTMSeqprob})] \label{prob:s-stmseq}
  Given two strings $T$ and $P$, 
  find a shortest substring in $T$ such that $P$ is a CT-subsequence of the substring.
\end{problem}

We can easily reduce {\SCTMSeqprob} to {\CTMSeqprob}. 
Thus, we only consider the reduction from the OV problem to {\SCTMSeqprob} in the following proof.
Let $A = \{\alpha_1, \ldots, \alpha_{n_A}\}$ and $B = \{\beta_1, \ldots, \beta_{n_B}\}$
be the input of the OV problem where $n_B \le n_A$ and $\alpha_i, \beta_j \in \{0, 1\}^d$ for every $1\le i\le n_A, 1\le j\le n_B$.
From $A$ and $B$, we construct an input of {\CTMSeqprob}:
text $T_A \in \Sigma^*$ of length $20n_Ad + 8n_A + 1\in O(n_Ad)$ and
pattern $P_B \in \Sigma^*$ of length $4n_Bd + 2n_B + 1 \in O(n_Bd)$
where $\Sigma = \{\mt{0}, \mt{1}, \mt{2}, \mt{3}\}$.
\tmnote*{changed}{While our reduction is similar to the reduction from the OV problems to the \emph{episode matching problem}~\cite{BilleGMSW22},
our construction for text and pattern is non-trivial and includes new ideas that exploit properties of Cartesian trees.
}The key idea of our reduction is to construct pattern $P_B$
  so that the Cartesian tree of $P_B$ has \emph{comb-like} structures.
  To achieve this, we insert characters $c$ between gadget strings in $P_B$
  such that $c$ is smaller than any other character appearing in the gadgets~(see also Fig.~\ref{fig:v2_beta}).
Thanks to such comb-like structures,
  we can prevent interference between Cartesian trees of gadget strings.

First, we define the \emph{coordinate gadgets} $C_1$ and $C_2$:
\begin{eqnarray*}
  C_1(a) &=& 
  \begin{cases}
    \mt{2323} & \mathrm{if} \; a = 0, \\
    \mt{3232} & \mathrm{if} \; a = 1.
  \end{cases} \\
  C_2(b) &=& 
  \begin{cases}
    \mt{323} & \mathrm{if} \; b = 0, \\
    \mt{222} & \mathrm{if} \; b = 1.
  \end{cases} 
\end{eqnarray*}
It can be seen that both $C_2(0) = \mt{323}$ and $C_2(1) = \mt{222}$ are CT-subsequences of $C_1(0) = \mt{2323}$. 
  Also, $C_2(0) = \mt{323}$ is a CT-subsequence of $C_1(1) = \mt{3232}$ 
  but $C_2(1) = \mt{222}$ is not.
  Thus, we have the following observation:
  \begin{observation}\label{obs:cgadget}
    String $C_2(b)$ is a CT-subsequence of string $C_1(a)$ if and only if $a\cdot b = 0$.
  \end{observation}

Next, we construct the \emph{vector gadgets} $V_1(\alpha)$ and $V_2(\beta)$ for all $\alpha\in A$ and $\beta\in B$:
\begin{alignat*}{12}
  V_1(\alpha) &={}& &\mt{1} \;C_1(\alpha[1])\;&\mt{1}&\;C_1(\alpha[2])\;&\mt{1}&\;\cdots\;&\mt{1}&\; C_1(\alpha[d])\; &\mt{1}& \\
  V_2(\beta)  &={}& &\mt{1} \;C_2(\beta[1])\;&\mt{1}&\;C_2(\beta[2])\;&\mt{1}&\;\cdots\;&\mt{1}&\; C_2(\beta[d])\; &\mt{1}&
\end{alignat*}
Note that $|V_1(\alpha)| = 5d+1$ and $|V_2(\beta)| = 4d+1$.
Fig.~\ref{fig:v2_beta} shows an example of $V_2(\beta)$ and $CT(V_2(\beta))$ where $\beta = (0, 1, 0, 0, 1)$.
\begin{figure}[tb]
  \centering
  \includegraphics[keepaspectratio, scale=0.45]{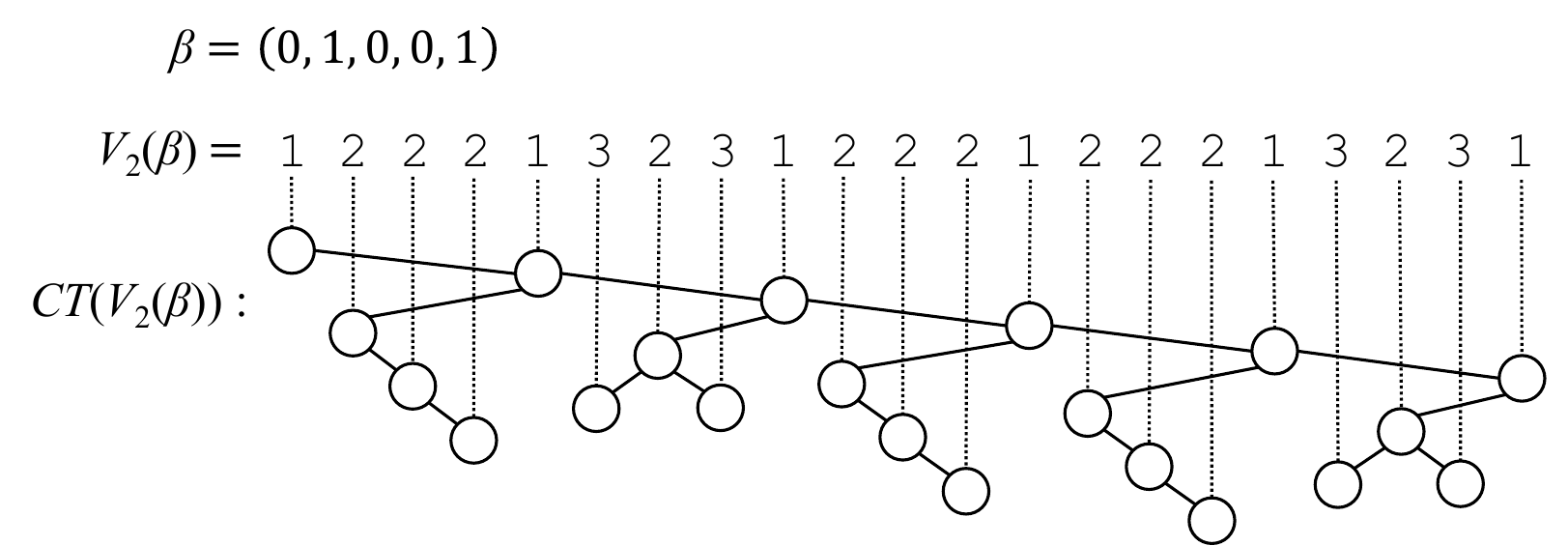}
  \caption{
    Illustration for $V_2(\beta)$ and $CT(V_2(\beta))$ where $\beta = (0, 1, 0, 0, 1)$.
    Since coordinate gadgets consist only of $\mt{2}$ and $\mt{3}$,
    each coordinate gadget is the left subtree of some $\mt{1}$, resulting in an unbalanced Cartesian tree.
    We call such a structure comb-like.
    Note that $V_2(\beta)$ contains six $\mt{1}$'s since $d = 5$ in this example.}
  \label{fig:v2_beta}
\end{figure}
The following Lemma~\ref{lem:vg} holds.
\begin{lemma} \label{lem:vg}
  String $V_2(\beta)$ is a CT-subsequence of string $V_1(\alpha)$ if and only if $\alpha$ and $\beta$ are orthogonal.
\end{lemma}
\begin{proof}
  For each $1 \le i \le d$, Cartesian trees of $C_1(\alpha[i])$ and $C_2(\beta[i])$ belong to the left subtree of the $(i+1)$th character $\mt{1}$ 
  in $CT(V_1(\alpha))$ and $CT(V_2(\beta))$, respectively,
  since $C_1(\alpha[i])$ and $C_2(\beta[i])$ consist only of $\mt{2}$ and $\mt{3}$.
Now we consider the following two cases:
  \begin{enumerate}
    \item If $\alpha$ and $\beta$ are orthogonal, 
      $C_2(\beta[i])$ is a CT-subsequence of $C_1(\alpha[i])$ for all $1 \leq i \leq d$
      by Observation~\ref{obs:cgadget}.
      Therefore, by assigning every $\mt{1}$ in $V_2(\beta)$ to every $\mt{1}$ in $V_1(\alpha)$, and
      $C_2(\beta[i])$ to $C_1(\alpha[i])$ for each $1 \le i \le d$,
      we obtain a subsequence of $V_1(\alpha)$ that CT-matches $V_2(\beta)$.
    \item Otherwise, there exists $k$ with $\alpha[k] = \beta[k] = 1$.
      By Observation~\ref{obs:cgadget},
      $C_2(\beta[k])=C_2(1)=\mt{222}$ is not a CT-subsequence of $C_1(\alpha[k]) = C_1(1) = \mt{3323}$.

      For a Cartesian tree of a string, the path from the root to the rightmost node is called the \emph{frontier} of the tree.
      From the form of $V_2(\beta)$,
      the frontier of $CT(V_2(\beta))$ consists of $(d+1)$ $\mt{1}$'s 
      and their left subtrees are of size three~(see Fig.~\ref{fig:v2_beta} for examples).
      We now prove Claim~\ref{claim}.
      \begin{claim}\label{claim}
        Assume that there is a \tmnote*{fixed}{subsequence} $S$ of $V_1(\alpha)$ that CT-matches $V_2(\beta)$.
        The frontier of $CT(S)$ consists only of character $\mt{1}$.
      \end{claim}
      \begin{proof}
        Assume on the contrary that the frontier of $CT(S)$ contains a character $c \ne \mt{1}$.
        By the definition of Cartesian trees, the left subtree of $c$ consists only of characters that are greater than $c$.
        Thus, $c = \mt{3}$ is invalid, so $c = \mt{2}$.
        Further, the left subtree of $\mt{2}$ is $CT(\mt{333})$ since the size of the left subtree is three.
        Since every coordinate gadget has at most two $\mt{3}$'s, selecting $\mt{333}$ from $V_1(\alpha)$ results in
        at least one $\mt{1}$ being \emph{unavailable} as an element of $S$.
        Also, the rightmost (lowest) frontier node of $CT(S)$ is $\mt{2}$, and thus,
        the rightmost $\mt{1}$ in $V_1(\alpha)$ cannot be contained in $S$.
        Hence, there are at least $t+1$ unavailable $\mt{1}$'s
        where $t$ is the number of $\mt{2}$'s in the frontier of $CT(S)$.
        Therefore, the number of available $\mt{1}$'s is at most $d-t$ and
        the number of $1$'s in the frontier of $CT(S)$ is $d+1-t$, a contradiction.
      \end{proof}
      By Claim~\ref{claim}, each $\mt{1}$ in $V_2(\beta)$ must be aligned to each $\mt{1}$ in $V_1(\alpha)$.
      Thus, $C_2(\beta[i])$ should be assigned to $C_1(\alpha[i])$ for each $1 \le i \le d$.
      However, as mentioned before, $C_2(\beta[k])$ is not a CT-subsequence of $C_1(\alpha[k])$.
      Therefore, there is no \tmnote*{fixed}{subsequence} $S$ of $V_1(\alpha)$ that CT-matches $V_2(\beta)$.
\end{enumerate}
\end{proof}

Now we construct text $T_A$ and 
pattern $P_B$,
joining dummy vector gadget $V_1(\zero) = \mt{1}(C_1(0)\mt{1})^d$
where $\zero$ denotes the $d$-dimensional zero vector:
\begin{eqnarray*}
  T_A &={}& 
  \mt{0} \left(\prod^{n_A}_{i=1}\left(V_1(\alpha_i) \; \mt{0} \; V_1(\zero) \; \mt{0}\right)\right)^2 \\
  P_B &={}& \mt{0} \; V_2(\beta_1) \; \mt{0} \; V_2(\beta_2) \; \mt{0} \; \cdots \; \mt{0} \; V_2(\beta_{n_B}) \; \mt{0}
\end{eqnarray*}

Note that, for any $\beta \in \{0, 1\}^d$, $V_2(\beta)$ is a CT-subsequence of $V_1(\zero)$.
We show the following lemmas:
\begin{lemma}\label{lem:1_2}
  If some $\alpha \in A$ and $\beta \in B$ are orthogonal, then $T_A$ has a substring $S$
  such that $|S| \le (n_B-1)(10d+4) +1$ and $P_B$ is a CT-subsequence of $S$. 
\end{lemma}
\begin{proof}
  We construct an alignment of (substrings of) $P_B$ to $T_A$ such that
  the alignment implies a subsequence of $T_A$ that CT-matches $P_B$.
  For each $i$ with $1 \le i \le n_B$,
  $CT(V_2(\beta_i))$ is the left subtree of the $(i+1)$th $\mt{0}$ in $CT(P_B)$
  since $V_2(\beta_i)$ contains only characters $\mt{1}$, $\mt{2}$ and $\mt{3}$.
Namely, similar to the case of $CT(V_2(\beta))$,
  $CT(P_B)$ has a comb-like structure with length-$(n_B+1)$ frontier consisting only of $\mt{0}$'s.
  Thus, by assigning every $\mt{0}$ in $P_B$ to some $\mt{0}$ in $T_A$,
  we can focus only on isomorphism between Cartesian trees of vector gadgets.

  Let $\alpha_i \in A$ and $\beta_j \in B$ be vectors which are orthogonal. 
  If $j \leq i$, we align $V_2(\beta_j)$ to the \emph{first} occurrence of $V_1(\alpha_i)$ in $T_A$ by Lemma~\ref{lem:vg}.
  By doing so, it is guaranteed that
  every vector gadget $V_2(\beta_s)$ in $P_B$ with $s \ne j$ can be aligned to
  some vector gadget $V_1(\zero)$ in $T_A$. 
  Symmetrically,
  if $j > i$, we align $V_2(\beta_j)$ to the \emph{second} second occurrence $V_1(\alpha_i)$ in $T_A$.

  Here, we consider three the following cases w.r.t.~the value of $j$.
\begin{enumerate}
    \item If $j=1$, all the vector gadgets in $P_B$, except $V_2(\beta_1)$, 
      can be aligned to $V_1(\zero)$'s to the right of $V_1(\alpha_i)$
      in order from closest to $V_1(\alpha_i)$, 
      and all $\mt{0}$'s in $P_B$ are aligned to $\mt{0}$'s which interleave vector gadgets in $T_A$.
      Then, the length of such substring of $T_A$ is $(n_B-1)(10d+4) +1$ 
      since
      (1) $\mt{0}  V_2(\beta_1)  \mt{0}  V_2(\beta_2)  \mt{0}$ is a CT-subsequence of
      the substring $\mt{0}  V_1(\alpha_i)  \mt{0}  V_1(\zero)  \mt{0}$ of $T_A$
      of length $10d + 5$, and
(2) $\prod_{k=3}^{n_B}(V_2(\beta_k)  \mt{0})$ is a CT-subsequence of 
      the substring $\prod_{k = i+1}^{i+n_B-2}(V_1(\alpha_k)\mt{0}V_1(\zero)\mt{0})$ of $T_A$
      of length $(n_B-2)(10d + 4)$.
\item If $j=n_B$, symmetrical to the case of $j=1$, we can align $P_B$ to $T_A$ appropriately.
      Then the length of such substring of $T_A$ is $(n_B-1)(10d+4) +1$
      since
      (1) $V_2(\beta_n)  \mt{0}$ is a CT-subsequence of 
      the substring $V_1(\alpha_i)  \mt{0}$ of $T_A$ of length $5d + 2$,
      (2) $\mt{0}  V_2(\beta_1)  \mt{0}$ is a CT-subsequence of
      the substring $\mt{0}  V_1(\zero)  \mt{0}$ of $T_A$ of length $5d + 3$, and
      (3) $\prod_{k=2}^{n_B-1}(V_2(\beta_k)  \mt{0})$ is a CT-subsequence of
      $\prod_{k=i-n_B+2}^{i-1}(V_1(\alpha_k)\mt{0}V_1(\zero)\mt{0})$ of $T_A$
      of length $(n_B-2)(10d + 4)$.
    \item Otherwise, both $V_2(\beta_{j-1})$ and $V_2(\beta_{j+1})$ can be aligned to $V_1(\zero)$ adjacent to $\mt{0}V_1(\alpha_i)\mt{0}$, 
      and all the remaining parts of $P_B$ are aligned in the same way as the previous cases (see also Fig.~\ref{fig:alignment} for illustration).
      Here, we discuss only the case $j \le i$.
      The other case $i < j$ can be shown in a symmetric way.
The length of the resulting substring of $T_A$ is $(n_B-1)(10d+4) + 1 - (5d+2)$ since
      (1) $V_2(\beta_j)  \mt{0}  V_2(\beta_{j+1})  \mt{0}$ is a CT-subsequence of
      the substring $V_1(\alpha_i)  \mt{0}  V_1(\zero)  \mt{0}$ of $T_A$ of length $10d + 4$,
      (2) $\prod_{k=j+2}^{n_B}(V_2(\beta_{k})  \mt{0})$ is a CT-subsequence of 
      the substring $\prod_{k=i+1}^{i-j+n_B-1}(V_1(\alpha_k)\mt{0}V_1(\zero)\mt{0})$ of $T_A$ of length $(n_B-j-1)(10d + 4)$,
      (3) $\prod_{k=2}^{j-1}(V_2(\beta_k)  \mt{0})$ is a CT-subsequence of
      the substring $\prod_{k=i-j+2}^{i-1}(V_1(\alpha_k)\mt{0}V_1(\zero)\mt{0})$ of $T_A$ of length $(j-2)(10d + 4)$, and
      (4) $\mt{0}  V_2(\beta_1)  \mt{0}$ is a CT-subsequence of 
      the substring $\mt{0}  V_1(\alpha_{i-j+1})  \mt{0}$ is $5d + 3$ of $T_A$ of length $5d + 3$.
      Thus, the length of the concatenation of the four consecutive substrings of $T_A$ is shorter than $(n_B-1)(10d+4)+1$.
  \end{enumerate}
\end{proof}
\newcommand{\asp}{~\;\:}
\newcommand{\bsp}{\hspace{0.55mm}}
\begin{figure}[tb]
  \centering
  \hspace{10.5mm}\includegraphics[scale=0.434]{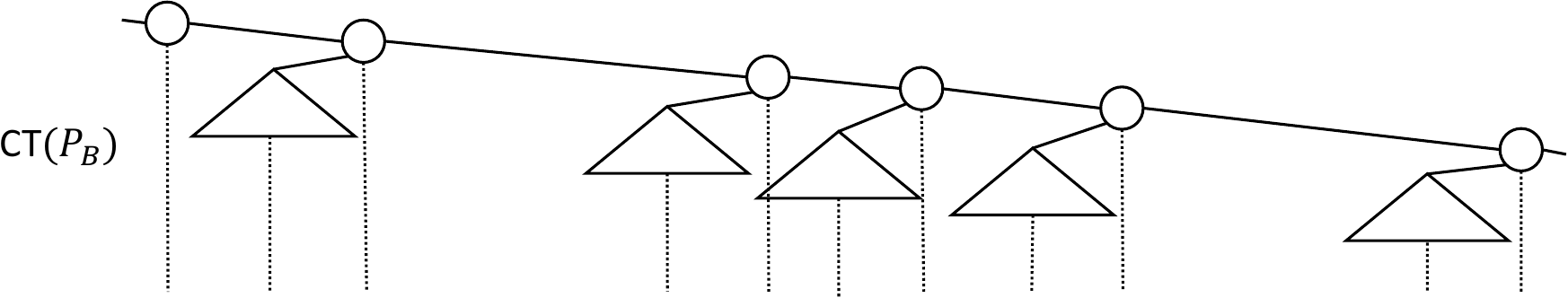}
  \vspace{-4.5mm}
  \begin{align*}
    T_A &=  \cdots \mt{0} \asp V_1(\zero) \asp \mt{0} \, V_1(\alpha_{i-1}) \, \mt{0} \asp V_1(\zero) \asp \mt{0} \, V_1(\alpha_i) \, \mt{0} \asp V_1(\zero) \asp \mt{0} \, V_1(\alpha_{i+1}) \, \mt{0} \asp V_1(\zero) \asp \mt{0}\cdots
    \\
    P_B &= \cdots \mt{0} \bsp V_2(\beta_{j-2}) \bsp \mt{0} \hspace{1.685cm} V_2(\beta_{j-1}) \bsp \mt{0} \, V_2(\beta_{j}) \, \mt{0} \bsp V_2(\beta_{j+1}) \bsp \mt{0} \hspace{1.685cm} V_2(\beta_{j+2}) \bsp \mt{0}\cdots
  \end{align*}
  \caption{
    Illustration for an alignment of pattern $P_B$ in text $T_A$ when
    $\alpha_i \in A$ and $\beta_j \in B$ are orthogonal for $i, j$ with $1 < j \leq i$ and $j < n_B$.
    The upper part shows a shape of a subtree of $CT(P_B)$.
    The subtree forms a comb-like structure because vector gadgets $V_2(\cdot) \in \{\mt{1}, \mt{2}, \mt{3}\}^*$ are interleaved with $\mt{0}$s.
    The lower part shows a part of the alignment of $P_B$ in $T_A$
    described in the third case of the proof of Lemma~\ref{lem:1_2}.
    The alignment confirms that $P_B$ is a CT-subsequence of $T_A$.
  }
  \label{fig:alignment}
\end{figure}

\yynote*{added}{Next, we consider the case where $\alpha$ and $\beta$ are not orthogonal for all pairs.
To facilitate the proof in the case, we prove the following lemma.
}\begin{lemma}\label{lem:num_of_one}
  Assume that $\alpha$ and $\beta$ are not orthogonal for all pairs of $\alpha \in A$ and $\beta \in B$.
  If $T_A$ has a substring $S$ such that $P_B$ is a CT-subsequence of $S$, then
  the frontier of $CT(S)$ contains at most one $\mt{1}$.
\end{lemma}
\begin{proof}
  Let $S[p] = \mt{1}$ be a character which belongs to the frontier of $CT(S)$.
  By the definition of Cartesian trees, the left subtree $L$ of $S[p]$ consists only of $\mt{2}$ and $\mt{3}$.
  Since $L$ must be isomorphic to $CT(V_2(\beta))$ for some $\beta \in B$,
  the frontier of $L$ consists only of $\mt{2}$~(similar to Claim~\ref{claim}).
  Again, by the definition of Cartesian trees,
  the left subtrees of $\mt{2}$'s in $L$ consist only of $\mt{3}$, i.e., $CT(\mt{333})$.
  Thus, $L$ is isomorphic to $CT(V_2(\tilde{\beta}))$ where $\tilde{\beta} = (1, 1, \ldots, 1)$.
  Since there are at most one such vector $\tilde{\beta} \in B$,
  the frontier of $CT(S)$ can contain at most one $\mt{1}$.
\end{proof}

\yynote*{added}{Now, we start to prove the following lemma.
}\begin{lemma}\label{lem:2_2}
  If $\alpha$ and $\beta$ are not orthogonal for all pairs of $\alpha \in A$ and $\beta \in B$, then $T_A$ dose not have a substring $S$
  such that $|S| \le (n_B-1)(10d+4) +1$
and $P_B$ is a CT-subsequence of $S$. 
\end{lemma}
\begin{proof}
By Lemma~\ref{lem:num_of_one}, any substring of $T_A$ which CT-matches $P_B$
  contains at most one $\mt{1}$, in turn, it contains at least $n_B$ $\mt{0}$'s.
  That is, the first $n_B$ $\mt{0}$'s in $P_B$ should be aligned to some $\mt{0}$'s in $T_A$.
  Recall that, for any $\alpha \in A$ and $\beta\in B$,
  $V_2(\beta)$ is not a CT-subsequence of $V_2(\alpha)$, but
  it is a CT-subsequence of $V_1(\zero)$.
Thus, the shortest substring of $T_A$ which contains prefix $\left(\prod_{k=1}^{n_B-2}\mt{0}V_2(\beta_k)\right)\cdot\mt{0}V_2(\beta_{n_B-1})\mt{0}$ of $P_B$ as a CT-subsequence
  is $\left(\prod_{k = s}^{n_B-1+s}\mt{0}V_1(\zero)\mt{0}V_1(\alpha_k)\right)\cdot\mt{0}V_1(\zero)\mt{0}$
  for some integer $s \ge 1$.
The remaining part of $P_B$ is $V_2(\beta_{n_B})\mt{0}$.
  Since $V_2(\beta_{n_B})$ is not a CT-subsequence of $V_1(\alpha_{n_B+s})$,
  the shortest substring of $T_A$ that starts with $V_1(\alpha_{n_B+s})$ and contains $V_2(\beta_{n_B})\mt{0}$ as a CT-subsequence
  ends within $\mt{0}V_1(\zero)\mt{0} = \mt{01}(\mt{23231})^d\mt{0}$.
  Furthermore, the ending position within $\mt{01}(\mt{23231})^d$ of such substring is greater than two
  since the length-$2$ suffix of $P_B$ is $\mt{10}$, which does not CT-match $\mt{01}$.
Therefore, the length of a shortest substring of $T_A$ that contains $P_B$ as a CT-subsequence is at least
  $(n_B-2)(10d+4) + (5d+3) + 5d+3 = (n_B-1)(10d+4) + 2 > (n_B-1)(10d+4) + 1$.
\end{proof}

From Lemmas~\ref{lem:1_2} and~\ref{lem:2_2}, 
we have a reduction from the OV problem to {\SCTMSeqprob} (and thus, to {\CTMSeqprob}) such that
$T_A \in O(n_Ad)$, $P_B \in O(n_Bd)$. 
If {\CTMSeqprob} can be solved in $O((|T_A||P_B|)^{1-\epsilon})$ time, 
then the OV problem can be solved in $O((n_An_Bd^2)^{1-\epsilon}) \subseteq O((n_An_B)^{1-\epsilon}\mathsf{poly}(d))$ time.
When $n_A = n_B$, the result contradicts OVH/SETH,
and when $n_B = \Theta({n_A}^\lambda)$ for $0 < \lambda \le 1$ and $d \le n^{o(1)}$, 
the result contradicts UOVH/SETH~(cf.~Section~\ref{sec:hardness}).
Therefore, Theorem~\ref{th:ov2ctm} holds.
 \section{Longest common CT-subsequence problems} \label{sec:ctlcs}

\subsection{{\CTLCSprob} algorithms for general ordered alphabets}
\label{sec:alg_on_general}

In this section, we propose an algorithm for solving {\CTLCSprob} for general ordered alphabets.
An $O(n^6)$-time and $O(n^4)$-space algorithm, which is our main result will be given in Section~\ref{subsec:fast_alg}.
We start from explaining an $O(n^8)$-time and $O(n^6)$-space algorithm for simplicity (Section~\ref{subsec:slow_alg}).

For each $c \in \Sigma$, let $P_c(S) = \{i \mid S[i] = c\}$
and let $S'$ be the string of length $|S|$ such that $S'[i] = (S[i], r_i)$
for $1 \leq i \leq |S|$, where $r_i$ is the rank of $i$ in $P_{S[i]}(S)$.
For ordered pairs $(c, r)$ and $(c', r')$ of characters and integers,
let $(c, r) < (c', r')$ iff (1) $c < c'$ or (2) $c = c'$ and $r < r'$.
Then, it holds that $\CT(S)$ and $\CT(S')$ are isomorphic.
Thus, without loss of generality,
we can assume that all characters in the string $S$ are distinct.
The same assumption applies to the other string $T$,
but $S$ and $T$ may share the same characters.

\subsubsection{$O(n^8)$-time and $O(n^6)$-space algorithm}
\label{subsec:slow_alg}

We refer to a pair $(i, j) \in [n]^2$ of positions in $S$ and $T$ as a \emph{pivot}.
Our algorithm in the general case is based on 
the idea of \emph{pivoted Cartesian-trees} from Oizumi et al.~\cite{OizumiKMIA22} defined as follows.
\begin{figure}[tb]
    \centering
    \includegraphics[keepaspectratio, scale=0.4]{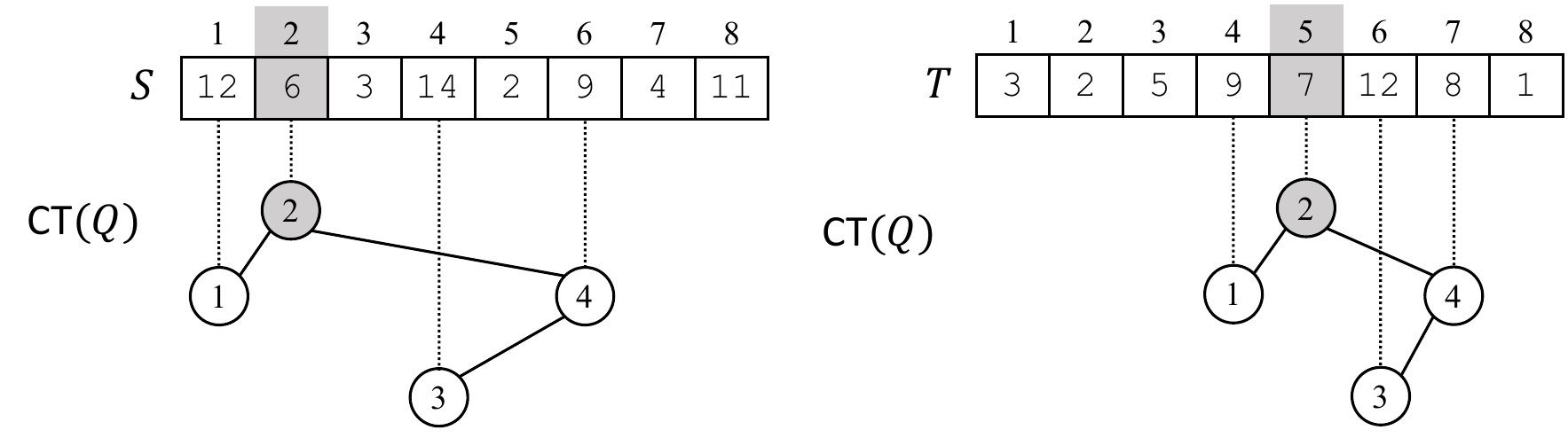}
    \caption{
        $Q=\str{\mt{12},\mt{6},\mt{14},\mt{9}}$ is a fixed longest common CT-subsequence of 
        $S=\str{\mt{12},\mt{6},\mt{3},\mt{14},\mt{2},\mt{9},\mt{4},\mt{11}}$ and $T=\str{\mt{3},\mt{2},\mt{5},\mt{9},\mt{7},\mt{12},\mt{8},\mt{1}}$
        with pivot $(2,5)$.
    }   
    \label{fig:fix_ctlcs}
\end{figure}

\begin{definition}[Fixed (longest) common CT-subsequence]
\label{def:fctlcs}
    Let $(i, j) \in [n]^2$ be a pivot of strings $S$ and $T$.
    A string $Q$ is said to be a \emph{fixed common CT-subsequence} (\emph{f-CT-CS}) of $S$ and $T$ with pivot $(i, j)$
    if there exist subscript sequences $I,J \in \mathcal{I}_{|Q|}^n$ such that
    \begin{itemize}
        \item $\CT(Q)=\CT(S_I)=\CT(T_J)$,
        \item $S[i]=\minst(S_I)$, and 
        \item $T[j]=\minst(T_J)$.
    \end{itemize}
    Moreover, a string $Q$ is said to be the \emph{fixed longest common CT-subsequence} (\emph{f-CT-LCS}) of $S$ and $T$ with pivot $(i, j)$
    if there are no f-CT-CS with pivot $(i, j)$ longer than $Q$ (see also Fig.~\ref{fig:fix_ctlcs}).
\end{definition}

Our solution is a dynamic programming based on the f-CT-LCS.
We also consider the f-CT-LCS for substrings of $S$ and $T$.
We will use positions $i, j \in [n]$ of the input strings to indicate a pivot for substrings,
namely, we say pivot $(i, j) \in [n]^2$ of substrings $S' = S[\ell_1..r_1]$ and $T' = T[\ell_2..r_2]$
instead of pivot $(i-\ell_1+1, j-\ell_2+1) \in [r_1-\ell_1+1] \times [r_2-\ell_2+1]$ of $S'$ and $T'$,
where $\ell_1 \leq i \leq r_1$, $\ell_2 \leq j \leq r_2$.
Let $C(i, j, \ell_1, r_1, \ell_2, r_2)$ be the length of 
the f-CT-LCS of substrings $S[\ell_1..r_1]$ and $T[\ell_2..r_2]$ with pivot $(i, j)$.
It is clear from the definition that 
\[
    \ctlcs(S,T)=\max\{C(i, j, 1, n, 1,n) \;|\; 1 \leq i \leq n, 1 \leq j \leq n\}
\]
holds.
The following lemma shows the main idea of computing $C(i, j, 1, n, 1, n)$ by a dynamic programming
(see also Fig.~\ref{fig:lem} for an illustration).
\begin{figure}[tb]
    \centering
    \includegraphics[keepaspectratio, scale=0.4]{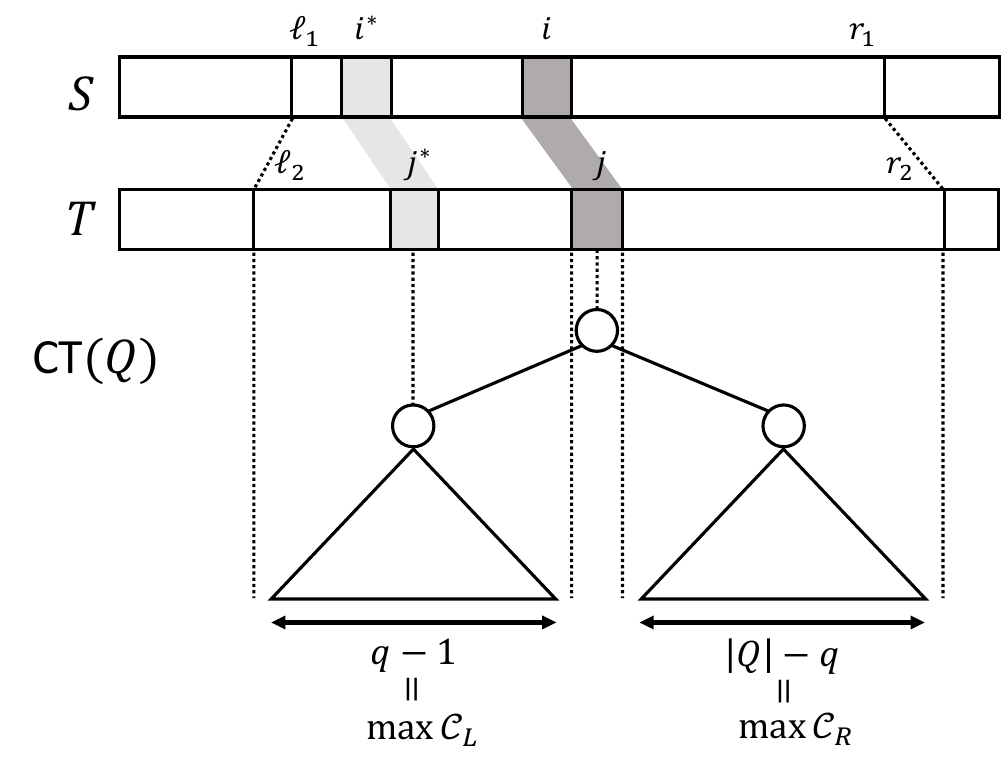}
    \caption{Sketch of our recurrence by Lemma~\ref{lem:main_idea}.}
    \label{fig:lem}
\end{figure}

\begin{lemma}
  \label{lem:main_idea}
  For any $(i, j, \ell_1, r_1, \ell_2, r_2) \in [n]^6$ that satisfies $\ell_1 \leq i \leq r_1$, $\ell_2 \leq j \leq r_2$,
  define $\mathcal{C}_L$ and $\mathcal{C}_R$ as follows:
  \begin{linenomath}
    \begin{align*}
      \mathcal{C}_L = \{&C(i', j', \ell_1, i-1, \ell_2, j-1) \\ &\mid S[i']>S[i], T[j']>T[j], \ell_1 \leq i' \leq i-1, \ell_2 \leq j' \leq j-1\} \cup \{0\}, \\
      \mathcal{C}_R = \{&C(i', j', i+1, r_1, j+1, r_2) \\ &\mid S[i']>S[i], T[j']>T[j], i+1 \leq i' \leq r_1, j+1 \leq j' \leq r_2\} \cup \{0\}.
    \end{align*}
  \end{linenomath}Then the recurrence 
    $C(i, j, \ell_1, r_1, \ell_2, r_2) = \max \mathcal{C}_L+\max \mathcal{C}_R+1$ holds.
\end{lemma}

\begin{proof}
    Let $Q$ be the f-CT-LCS of $S[\ell_1..r_1]$ and $T[\ell_2..r_2]$ with $(i, j)$.
    By Definition~\ref{def:fctlcs},
    there exist subscript sequences $I=(i_1,\ldots,i_{|Q|})$ and $J=(j_1,\ldots,j_{|Q|})$ that satisfy $\CT(Q)=\CT(S_I)=\CT(T_J)$.
    It is also clear from the definition that $S[i]=\minst(S_I)$ and $T[j]=\minst(T_J)$ hold.
    Let $q = \minid(Q)$.
    We show that $q-1 = \max \mathcal{C}_L$ holds.
    \begin{itemize}
        \item Assume that $q=1$.
        In this case, we need to show $\mathcal{C}_L = \{0\}$.
        Suppose on the contrary that there exists $(i', j')$ 
        such that $S[i']>S[i]$,  $T[j']>T[j]$, $\ell_1 \leq i' \leq i-1$, and $\ell_2 \leq j' \leq j-1$.
        Let $I^* = (i', i_1, \ldots, i_{|Q|})$, $J^* = (j', j_1, \ldots, j_{|Q|})$, 
        and $Q^* = \alpha \cdot Q$ where $\alpha$ is a character in $\Sigma$ that satisfies $\alpha > Q[q]$.
Also, $\CT(Q^*)=\CT(S_{I^*})=\CT(T_{J^*})$ holds
        since $\CT(Q)=\CT(S_I)=\CT(T_J)$, $\alpha > Q[q]$, $S[i']>S[i]$, and $T[j']>T[j]$ hold.
Moreover, $Q^*$ is an f-CT-CS of $S[\ell_1..r_1]$ and $T[\ell_2..r_2]$ with $(i, j)$,
        since $S[i] = \minst(S_{I^*})$ and $T[j] = \minst(T_{J^*})$.
        This contradicts the fact that $Q$ is the f-CT-LCS of $S[\ell_1..r_1]$ and $T[\ell_2..r_2]$ with $(i, j)$ (from $|Q^*|>|Q|$).
        Thus, $\mathcal{C}_L = \{0\}$ and $\max \mathcal{C}_L = 0$.
        \item Assume that $q>1$.
        Let $p_1$ (resp., $p_2$) be the predecessor of $i$ in $I$ (resp., the predecessor of $j$ in $J$).
        By the definition of the Cartesian-tree, 
        $\CT(Q[1..q-1]) = \CT(S_{I[i_1:p_1]}) = \CT(T_{J[j_1:p_2]})$ holds,
        and there exist $i^*$ and $j^*$ such that $S[i^*] = \minst(S_{I[i_1:p_1]}) (> S[i])$ and $T[j^*]=\minst(T_{J[j_1:p_2]}) (> T[j])$.
        This implies that $Q[1..q-1]$ is an f-CT-CS of $S[\ell_1..i-1]$ and $T[\ell_2..j-1]$ with $(i^*, j^*)$.
        Thus, $|Q[1..q-1]| = q-1 \leq \max \mathcal{C}_L$ holds.
        In the rest of this case, we show $q-1 \geq \max \mathcal{C}_L$ to prove the equality.
        Suppose on the contrary that $q-1 < \max \mathcal{C}_L$.
        Since $0 < q-1 < \max \mathcal{C}_L$ (from assumptions),
        there exists $(i'', j'') \in [n]^2$ such that $C(i'', j'', \ell_1, i-1, \ell_2, j-1) = \max \mathcal{C}_L$.
        Let $Q''$ be an f-CT-LCS of $S[\ell_1..i-1]$ and $T[\ell_2..j-1]$ with $(i'', j'')$.
        Then there exist subscript sequences $I''$ and $J''$ over $\{\ell_1, \ldots, i-1\}$ and $\{\ell_2, \ldots, j-1\}$, respectively,
        such that $\CT(Q'') = \CT(S_{I''}) = \CT(T_{J''})$.
        Let $\hat{I}$ denote the subscript sequence that is the concatenation of $I''$ and $I[i:i_{|Q|}]$,
        and $\hat{J}$ denote the subscript sequence that is the concatenation of $J''$ and $J[j:j_{|Q|}]$.
        Then $\CT(Q'' \cdot Q[q..|Q|]) = \CT(S_{\hat{I}}) = \CT(T_{\hat{J}})$,
        $S[i] = \minst(S_{\hat{I}})$, and $T[j] = \minst(T_{\hat{J}})$ hold.
        This implies that $Q'' \cdot Q[q..|Q|]$ is an f-CT-LCS of $S[\ell_1..i-1]$ and $T[\ell_2..j-1]$ with $(i, j)$.
        However, $|Q'' \cdot Q[q..|Q|]| = |Q''|+|Q[q..|Q|]| = \max \mathcal{C}_L+|Q|-q+1 > q-1+|Q|-q+1 = |Q|$ holds.
        This contradicts to the fact that $Q$ is the f-CT-LCS of $S[\ell_1..i-1]$ and $T[\ell_2..j-1]$ with $(i, j)$.
        Thus $q-1 = \max \mathcal{C}_L$ also holds for $q>1$.
    \end{itemize}
    We can also prove $|Q|-q = \max \mathcal{C}_R$ by a symmetric manner.
    Therefore, $|Q| = q-1 + |Q|-q + 1 = \max \mathcal{C}_L+\max \mathcal{C}_R+1$ holds.
\end{proof}

Then we can obtain an $O(n^8)$-time and $O(n^6)$-space algorithm for solving {\CTLCSprob} based on Lemma~\ref{lem:main_idea}
(see Algorithm~\ref{alg:ctlcs}).
\begin{algorithm2e}[tbh]
  \caption{Algorithm for solving {\CTLCSprob}}
\label{alg:ctlcs}
\KwIn{Strings $S[1..n], T[1..n] \in \Sigma^*$}
    \KwOut{$\ctlcs(S,T)$}
    \SetKw{To}{to}
    \SetKw{And}{and}
    Find $i_1, i_2, \ldots, i_n$ that satisfy $S[i_1]>S[i_2]> \cdots >S[i_n]$\; \label{al1:i-order}
    Find $j_1, j_2, \ldots, j_n$ that satisfy $T[j_1]>T[j_2]> \cdots >T[j_n]$\; \label{al1:j-order}
    $\mathit{ctlcs} \leftarrow 0$\;
    \For{$i \leftarrow i_1$ \To $i_n$}{ \label{al1:for-i-order}
        \For{$j \leftarrow j_1$ \To $j_n$}{ \label{al1:for-j-order}
            \For{$\ell_1 \leftarrow 1$ \To $i$}{
                \For{$r_1 \leftarrow i$ \To $n$}{
                    \For{$\ell_2 \leftarrow 1$ \To $j$}{
                        \For{$r_2 \leftarrow j$ \To $n$}{
                            $\mathcal{C}_L \leftarrow 0$\;
                            \If{$\ell_1 \neq i$ \And $\ell_2 \neq j$}{
                                \For{$i' \leftarrow \ell_1$ \To $i-1$}{
                                    \For{$j' \leftarrow \ell_2$ \To $j-1$}{
                                        \If{$S[i']>S[i]$ \And $T[j']>T[j]$}{
                                            $\mathcal{C}_L \leftarrow \max(\mathcal{C}_L, C[i'][j'][\ell_1][i-1][\ell_2][j-1])$\;
                                        }
                                    }
                                }
                            }
                            $\mathcal{C}_R \leftarrow 0$\;
                            \If{$r_1 \neq i$ \And $r_2 \neq j$}{
                                \For{$i' \leftarrow i+1$ \To $r_1$}{
                                    \For{$j' \leftarrow j+1$ \To $r_2$}{
                                        \If{$S[i']>S[i]$ \And $T[j']>T[j]$}{
                                            $\mathcal{C}_R \leftarrow \max(\mathcal{C}_R, C[i'][j'][i+1][r_1][j+1][r_2])$\;
                                        }
                                    }
                                }
                            }
                            $C[i][j][\ell_1][r_1][\ell_2][r_2] \leftarrow \mathcal{C}_L+\mathcal{C}_R+1$\;
                        }
                    }
                }
            }
            $\mathit{ctlcs} \leftarrow \max(\mathit{ctlcs}, C[i][j][1][n][1][n])$\;
        }
    }
    \Return{$\mathit{ctlcs}$}
\end{algorithm2e}
Our algorithm computes a six-dimensional table $C$ 
for any $(i, j, \ell_1, r_1, \ell_2, r_2) \in [n]^6$ that satisfies $\ell_1 \leq i \leq r_1$, $\ell_2 \leq j \leq r_2$.
Notice that the processing order $i_1, i_2, \ldots, i_n$ (resp., $j_1, j_2, \ldots, j_n$) w.r.t. index $i$ (resp., $j$) 
has to satisfy $S[i_1] > S[i_2] > \cdots > S[i_n]$ (resp., $T[j_1] > T[j_2] > \cdots > T[j_n]$). 
The algorithm finally returns $\ctlcs(S,T) = \max\{C(i, j, 1, n, 1,n) \mid 1 \leq i \leq n, 1 \leq j \leq n\}$.
For each fixed $(i, j, \ell_1, r_1, \ell_2, r_2) \in [n]^6$ (i.e., $O(n^6)$ iterations), we can compute $\mathcal{C}_L$ and $\mathcal{C}_R$ in $O(n^2)$ time.
Therefore, we can compute table $C$ in $O(n^8)$ time and $O(n^6)$ space.
\begin{theorem}
\label{thm:ct-lcs}
  {\CTLCSprob} can be solved in $O(n^8)$ time and $O(n^6)$ space.
\end{theorem}

\subsubsection{$O(n^6)$-time $O(n^4)$-space algorithm}
\label{subsec:fast_alg}

In the sequel, we propose an improved algorithm that is based on the previous algorithm and runs in $O(n^6)$ time and $O(n^4)$ space.
The key observation is that $\mathcal{C}_L$ and $\mathcal{C}_R$ actually depend on only four variables.
Namely,
$\mathcal{C}_L$ depends on $(i, j, \ell_1, \ell_2)$, and $\mathcal{C}_R$ depends on $(i, j, r_1, r_2)$.
Let $L(i, j, \ell_1, \ell_2) = \max \mathcal{C}_L$ and $R(i, j, r_1, r_2) = \max \mathcal{C}_R$.
Then we can represent $C(i, j, \ell_1, r_1, \ell_2, r_2)$ as 
\[
    C(i, j, \ell_1, r_1, \ell_2, r_2) = L(i, j, \ell_1, \ell_2)+R(i, j, r_1, r_2)+1.
\]
Based on this recurrence, we can obtain the following alternative lemma.

\begin{lemma} \label{lem:main_idea_fast}
    For any $(i, j, \ell_1, r_1, \ell_2, r_2) \in [n]^6$ that satisfies $\ell_1 \leq i \leq r_1$, $\ell_2 \leq j \leq r_2$,
    the following recurrences hold:    
    \begin{linenomath}
    \begin{align*}
        L(i, j, \ell_1, \ell_2) = \max \{&L(i', j', \ell_1, \ell_2)+R(i', j', i-1, j-1)+1 \mid S[i']>S[i],\\
        &T[j']>T[j], \ell_1 \leq i' \leq i-1, \ell_2 \leq j' \leq j-1\} \cup \{0\},\\
R(i, j, r_1, r_2)=\max\{&L(i', j', i+1, j+1)+R(i', j', r_1, r_2)+1 \mid S[i']>S[i],\\
        &T[j']>T[j],\;i+1 \leq i' \leq r_1,\;j+1 \leq j' \leq r_2\} \cup \{0\}.
    \end{align*}
  \end{linenomath}
\end{lemma}

It follows from the definitions that 
\[
    \ctlcs(S, T)=\max\{L(i, j, 1, 1)+R(i, j, n, n)+1\ \mid 1 \leq i \leq n, 1 \leq j \leq n\}.
\]

Then we can obtain an $O(n^6)$-time and $O(n^4)$-space algorithm for solving {\CTLCSprob} based on Lemma~\ref{lem:main_idea_fast}.
Our algorithm computes two dimensional tables $L$ and $R$.
For each fixed $(i, j) \in [n]^2$ (i.e., $O(n^2)$ iterations), 
we can compute $L(i, j, \cdot, \cdot)$ and $R(i, j, \cdot, \cdot)$ in $O(n^4)$ time.
The algorithm finally returns $\ctlcs(S,T) = \max\{L[i][j][1][1]+R[i][j][n][n]+1 \mid 1 \leq i \leq n, 1 \leq j \leq n\}$.
Therefore, we can compute table $C$ in $O(n^6)$ time and $O(n^4)$ space.

\begin{theorem} \label{thm:ct-lcs_fast}
  {\CTLCSprob} can be solved in $O(n^6)$ time and $O(n^4)$ space.
\end{theorem}

We can compute a CT-LCS by storing the following additional information:
\ynnote*{modified}{We store pivot $(i',j')$ with $L(i, j, \ell_1, \ell_2)$ 
that satisfies $L[i][j][\ell_1][\ell_2] = L[i'][j'][\ell_1][\ell_2]+R[i'][j'][i-1][j-1]+1$ (also for $R$).
}If we do so, we can compute a CT-LCS by tracking back the tables from the pivot $(i,j)$ that gives $\ctlcs(S,T)$ in $O(|\ctlcs(S,T)|) = O(n)$ time.

\begin{corollary} \label{cor:trace_back}
    A CT-LCS can be computed in $O(n^6)$ time and $O(n^4)$ space.
\end{corollary}
 \subsection{{\CTLCSprob} algorithm for binary}
\label{sec:binary_LCS}

In this section, we propose an algorithm for solving {\CTLCSprob} for the binary alphabet $\{0, 1\}$.
Throughout this section, 
we assume that the strings $S$ and $T$ are binary strings
and discard the assumption that
all characters are distinct in $S$ and in $T$.

Similarly to our algorithm for {\CTMSeqprob} for binary alphabets, which was proposed in Section~\ref{sec:binary_matching},
our algorithm for {\CTMSeqprob} on binary strings is based on Lemma~\ref{lem:only1_ct_match} and Lemma~\ref{lem:bi_ct_match}.

Let $N_1(S)$ be the number of occurrences of $1$ in string $S$, 
and $L_{01}(S)$ the length of the longest non-decreasing subsequence
of $S$ that contains $0$.
If there is no such subsequence, let $L_{01}(S) = 0$.
We also define $\cand(S,T)$ as the maximum integer $k = |w|+i+j = |w|+i'+j'$ 
such that $w0^i1^j$ is a subsequence of $S$ and $w0^{i'}1^{j'}$ is a subsequence of $T$ 
for some string $w$, and integers $i, i' \geq 1$, $j, j' \geq 0$.
Then the following properties hold for subsequences $S'$ and $T'$ of $S$ and $T$ that give $\ctlcs(S,T)$.
\begin{itemize}
  \item If $0$ appears in both $S'$ and $T'$,
  $\ctlcs(S,T) = \cand(S,T)$ (by Lemma~\ref{lem:bi_ct_match}).
  \item If either $S'$ or $T'$ does not contain $0$,
  $\ctlcs(S,T)$ equals $\min(N_1(S), L_{01}(T))$ or $\min(N_1(T), L_{01}(S))$, respectively (from Lemma~\ref{lem:only1_ct_match}).
  \item If $0$ does not appears in both $S'$ and $T'$, $\ctlcs(S,T) = \min(N_1(S), N_1(T))$.
\end{itemize}

Due to the above properties, $\ctlcs(S,T) = \max(\cand(S,T),m_1, m_2, m_3)$ holds
for $m_1 = \min(N_{1}(S), L_{01}(T))$, $m_2 = \min(N_{1}(T), L_{01}(S))$, and $m_3 = \min(N_1(S), N_1(T))$.

Now we are ready to describe our algorithm.
Let $\lnd_S(i) = L_{01}(S[i..n])$ for any integer $i$ with $1 \leq i \leq n$.
For convenience, let $\lnd_S(n+1) = 0$.
Firstly, we compute $N_1(S[i..n])$, $N_1(T[i..n])$, $\lnd_S(i)$, and $\lnd_T(i)$ for all $i$ that satisfies $1 \leq i \leq n$.
We can easily compute $N_1(S[i..n])$ and $N_1(T[i..n])$ for all $i$ in $O(n)$ time and space.
We can also compute $\lnd_S(i)$ (and $\lnd_T(i)$ in a similar way) by using the following recurrence:
\begin{linenomath}
\begin{equation*}
  \lnd_S(i)= \left\{ 
    \begin{alignedat}{2}   
      &\max(\lnd_S(i+1)+1, N_1(S[i+1..n])+1) \quad & \text{if } S[i] = 0, \\   
      &\lnd_S(i+1)                     \quad & \text{if } S[i] = 1.
    \end{alignedat} 
  \right.
\end{equation*}
\end{linenomath}
These values can also be computed in $O(n)$ time and space.

Since 
$\cand(S,T)$ requires the length of $w$ which is described in the above discussion,
we use a data structure for computing the longest common subsequence $\lcs(i,j)$ of $S[1..i]$ and $T[1..j]$.
For convenience, we set $\lcs(i,0) = \lcs(0,i) = 0$ for all $i \in [n] \cup \{0\}$.
By using the Four-Russians method~\cite{MASEK198018},
we can compute an $O(n^2/\log n)$-space data structure in $O(n^2/\log n)$ time
that can answer $\lcs(i,j)$ in $O(\log^2 n)$ time for any $i \in [n]\cup \{0\}$ and $j \in [n]\cup \{0\}$.
By the definition of $\cand$, the following equation can be obtained:
\[
  \cand(S,T) = \max_{1 \leq \ell \leq n} \{\lcs(p_{\ell}-1, q_{\ell}-1) + \ell\}
\]
where $p_{\ell} = \max\{p \mid \lnd_S(p) = \ell \}$ and $q_{\ell} = \max\{q \mid \lnd_T(q) = \ell \}$
(see also Fig.~\ref{fig:ctlcs_prime} for an illustration).
\begin{figure}[tb]
  \centering
  \includegraphics[keepaspectratio, scale=0.5]{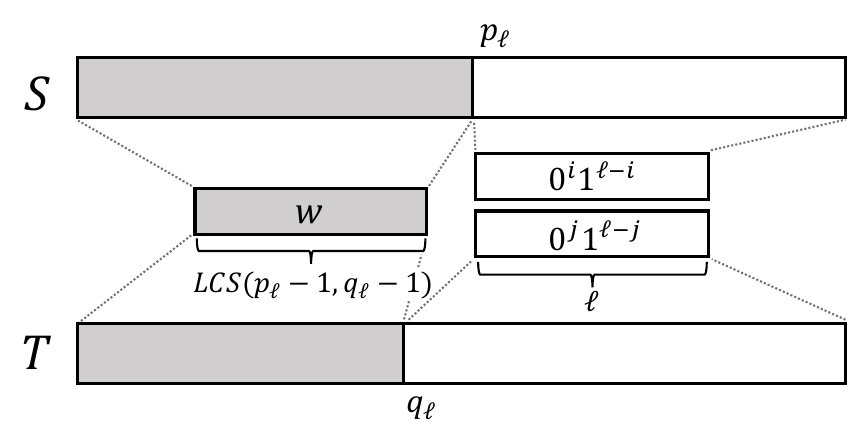}
  \caption{Illustration for our idea for computing $\cand(S,T)$.}
  \label{fig:ctlcs_prime}
\end{figure}Since we have already computed $\lnd_S(i)$, $\lnd_S(i)$, and the data structure for $\lcs$,
we can compute $\cand(S,T)$ in $O(n\log^2 n)$ time based on the above equation.
Finally, we can obtain $\ctlcs(S,T)$ by computing $\max(\cand(S,T),m_1, m_2, m_3)$ in constant time
(see also Algorithm~\ref{alg:ctlcs_binary}).\begin{algorithm2e}[tbh]
  \caption{Algorithm for solving {\CTLCSprob} on binary strings}
  \label{alg:ctlcs_binary}
\KwIn{Binary strings $S[1..n], T[1..n] \in \{0, 1\}^*$}
      \KwOut{$\ctlcs(S,T)$}
      \SetKw{To}{to}
      \SetKw{And}{and}
      Precompute data structure $\mathit{LCS}[i][j]$ that can answer $\lcs(i,j)$ in $O(\log^2 n)$ time
      for any $i \in [n]\cup \{0\}$ and $j \in [n]\cup \{0\}$\;
      $N_S[n+1] \leftarrow 0$\;
      \For{$i \leftarrow n$ \To $1$}{
          \If{$S[i] = 0$}{
              $N_S[i] \leftarrow N_S[i+1]$\;
          }
          \Else{
              $N_S[i] \leftarrow N_S[i+1]+1$\;
          }
      }
      $\lnd_S[n+1] \leftarrow 0$\;
      \For{$i \leftarrow n$ \To $1$}{
          \If{$S[i] = 0$}{
              $\lnd_S[i] \leftarrow \max(\lnd_S[i+1]+1, N_S[i+1]+1)$\;
          }
          \Else{
              $\lnd_S[i] \leftarrow N_S[i+1]$\;
          }
      }
      $p[\ell] \leftarrow 0$ for all $\ell \in [n]$\;
      \For{$i \leftarrow 1$ \To $n$}{
          $p[\lnd_S[i]] \leftarrow i$\;
      }
      Compute $N_T[i] = N_1(T[i..n])$, $\lnd_T[i] = \lnd_T(i)$, $q[\ell] = \max\{q \;|\; \lnd_T[q]=\ell\}$ 
      for all $i \in [n+1]$ and $\ell \in [n]$ in the same way\;
      $\mathit{cand} \leftarrow 0$\;
      \For{$\ell \leftarrow 1$ \To $n$}{
          \If{$p[\ell] \neq 0$ \And $q[\ell] \neq 0$}{
              $\mathit{cand} \leftarrow \max(\mathit{cand}, \mathit{LCS}[p[\ell]-1][q[\ell]-1]+\ell)$\;
          }
      }
      $m_1 \leftarrow \min(N_S[1], \lnd_T[1])$\;
      $m_2 \leftarrow \min(\lnd_S[1], N_T[1])$\;
      $m_3 \leftarrow \min(N_S[1], N_T[1])$\;
      \Return{$\max(\mathit{cand}, m_1, m_2, m_3)$}
\end{algorithm2e}

\begin{theorem}
\label{thm:ct-lcs_bi}
  {\CTLCSprob} on binary strings can be solved in $O(n^2/\log n)$ time and $O(n^2/\log n)$ space.
\end{theorem}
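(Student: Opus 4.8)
The plan is to prove the theorem in two stages: first establish that the value returned by Algorithm~\ref{alg:ctlcs_binary} equals $\ctlcs(S,T)$, and then tally the time and space. For correctness I would lean on the characterization $\ctlcs(S,T)=\max(\cand(S,T),m_1,m_2,m_3)$ already obtained from Lemma~\ref{lem:only1_ct_match} and Lemma~\ref{lem:bi_ct_match}: one splits on whether the optimal subsequences $S'$ and $T'$ realizing $\ctlcs(S,T)$ both contain a $0$ (then Lemma~\ref{lem:bi_ct_match} forces a common prefix $w$ followed by a block of $0$'s and then a block of $1$'s in each, so the optimum is $\cand(S,T)$), or at least one of them is a run of $1$'s (then Lemma~\ref{lem:only1_ct_match} reduces the optimum to one of $m_1,m_2,m_3$). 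Hence the only non-routine correctness obligation is the identity $\cand(S,T)=\max_{1\le\ell\le n}\{\lcs(p_\ell-1,q_\ell-1)+\ell\}$.

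I would prove this identity by two inequalities. For the lower bound, fix any $\ell$ for which both $p_\ell$ and $q_\ell$ are defined; since $\lnd_S(p_\ell)=\ell$ and $\lnd_T(q_\ell)=\ell$, the suffixes $S[p_\ell..n]$ and $T[q_\ell..n]$ each contain a non-decreasing (hence $0^a1^b$-shaped) subsequence of length $\ell$, while an LCS of the disjoint prefixes $S[1..p_\ell-1]$ and $T[1..q_\ell-1]$ of length $\lcs(p_\ell-1,q_\ell-1)$ supplies a common $w$; concatenating yields a $\cand$-witness of length $\lcs(p_\ell-1,q_\ell-1)+\ell$. For the upper bound, take an optimal decomposition $w\,0^i1^j$ in $S$ and $w\,0^{i'}1^{j'}$ in $T$ with $\ell=i+j=i'+j'$, and let $a$ (resp.\ $b$) be the $S$-position (resp.\ $T$-position) immediately preceding the matched suffix. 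Then $w$ is a common subsequence of $S[1..a]$ and $T[1..b]$, so $|w|\le\lcs(a,b)$, while $\lnd_S(a+1)\ge\ell$ and $\lnd_T(b+1)\ge\ell$. Using that $\lnd_S$ is non-increasing in its argument and drops by at most $1$ per step, $\lnd_S(a+1)\ge\ell$ forces $a+1\le p_\ell$, i.e.\ $a\le p_\ell-1$, and symmetrically $b\le q_\ell-1$; monotonicity of $\lcs$ then gives $|w|+\ell\le\lcs(p_\ell-1,q_\ell-1)+\ell$, closing the loop.

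For the complexity I would account for each phase in turn. The Four-Russians data structure of Masek and Paterson~\cite{MASEK198018} is built in $O(n^2/\log n)$ time and space and answers any query $\lcs(i,j)$ in $O(\log^2 n)$ time. The arrays $N_1(S[i..n])$, $N_1(T[i..n])$, $\lnd_S$, $\lnd_T$, and the tables $p_\ell$, $q_\ell$ are each computed by a single backward or forward scan in $O(n)$ time and space via the stated recurrences. Evaluating $\cand(S,T)$ issues at most $n$ queries to the $\lcs$ structure, costing $O(n\log^2 n)=o(n^2/\log n)$ time, and $m_1,m_2,m_3$ take $O(1)$. Summing, the preprocessing term $O(n^2/\log n)$ dominates both the time and the space.

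The step I expect to be the main obstacle is the upper-bound direction of the $\cand$ identity, and in particular making the prefix/suffix split rigorous: one must argue that an optimal ``$w$ then $0$'s then $1$'s'' decomposition can always be cut at a clean position so that $w$ lives entirely in a prefix and the non-decreasing part entirely in the complementary suffix, and that the matched suffix genuinely contains a $0$ (the $i,i'\ge1$ requirement) so that the all-$1$ case is deferred to $m_1,m_2,m_3$ rather than double-counted. Ensuring that the definition of $\lnd$ enforces ``at least one $0$'' exactly where needed, together with the monotonicity and unit-step property of $\lnd$ invoked above, is where the care lies; the remaining complexity bookkeeping is routine.
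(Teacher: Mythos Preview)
Your proposal is correct and follows essentially the same route as the paper: reduce $\ctlcs$ to $\max(\cand,m_1,m_2,m_3)$ via Lemmas~\ref{lem:only1_ct_match} and~\ref{lem:bi_ct_match}, express $\cand$ through the $\lcs(p_\ell-1,q_\ell-1)+\ell$ formula, and invoke Four-Russians for the dominant cost. In fact you supply more detail than the paper, which simply asserts the $\cand$ identity ``by definition''; your two-inequality argument is sound, and the obstacle you flag (that the non-decreasing tail must contain a $0$) dissolves once you observe from the $\lnd_S$ recurrence that $\lnd_S$ strictly increases (scanning right to left) precisely at $0$-positions, so $p_\ell$ always satisfies $S[p_\ell]=0$.
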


We can reconstruct a CT-LCS of $S$ and $T$ in $O(n\log n)$ time as follows:
If one of $m_1$, $m_2$, and $m_3$ gives $\ctlcs(S,T)$,
we can easily obtain a CT-LCS in $O(n)$ time by using $\lnd_S(i)$ and $\lnd_T(i)$.
Otherwise, two subsequences $S'$ and $T'$ which give $\ctlcs(S,T)$ can be represented as
$S' = w0^i1^{\ell - i}$ and $T' = w0^j1^{\ell - j}$ for some $i$ and $j$.
Integers $i$ and $j$ can be obtained by $\lnd_S(i)$ and $\lnd_T(i)$.
In the Four-Russians method, $(n \times n)$-table $\lcs$ is factorized into $(n/\log n \times n/\log n)$-blocks.
The data structure actually stores LCS values on boundaries of blocks.
Thus we can obtain string $w$ by tracing back in $O((n/\log n) \cdot \log^2n) = O(n\log n)$ time (see also Fig.~\ref{fig:lcs_trace}).
\begin{figure}[tb]
  \centering
  \includegraphics[keepaspectratio, scale=0.4]{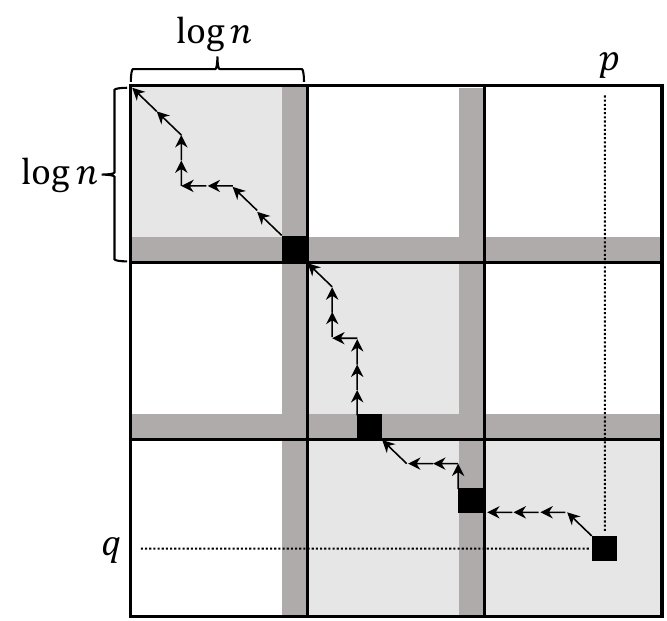}
  \caption{Illustration for an idea for reconstructing the LCS part.}
  \label{fig:lcs_trace}
\end{figure}

\begin{corollary}
  A CT-LCS of two binary strings can be computed in $O(n^2/\log n)$ time and $O(n^2/\log n)$ space.
\end{corollary}
 \clearpage
\subsection{Conditional lower bound for {\CTLCSprob}} \label{sec:lb_ctlcs}
In this section, we prove the following theorem. 
\begin{theorem}\label{th:ov2lcs}
  If there is an $\epsilon > 0$ such that {\CTLCSprob} for an alphabet of size $5$ 
  can be solved in $O((|S||T|)^{1-\epsilon})$ time, then OVH and SETH are false.
\end{theorem}

We extend the LCS reduction~\cite{BringmannK15} to our CT-subsequence version using a similar idea as described in Section~\ref{sec:lb_ctmseq}.
Let $A = \{\alpha_1, \ldots, \alpha_{n_A}\}$ and $B = \{\beta_1, \ldots, \beta_{n_B}\}$
be the input of the OV problem where $n_B \le n_A$ and $\alpha_i, \beta_j \in \{0, 1\}^d$ for every $1\le i\le n_A, 1\le j\le n_B$.
From $A$ and $B$,
we construct an input of {\CTLCSprob}:
string $S_A \in \Sigma^+$ of length $240n_Ad^2 + 56n_Ad + 100d^2 + 16n_A \in O(n_A\mathsf{poly}(d))$ and
string $T_B \in \Sigma^+$ of length $400n_Ad^2 + 25n_Bd^2 + 100d^3 + 14n_Bd + 100d^2 + 4n_B \in O((n_A + n_B)\mathsf{poly}(d))$,
where $\Sigma = \{\mt{0}, \mt{1}, \mt{2}, \mt{3}, \mt{4}\}$.

First, we define the \emph{coordinate gadgets} $C_1$ and $C_2$:
\begin{eqnarray*}
  C_1(a) &=& 
  \begin{cases}
    \mt{3434} & \mathrm{if} \; a = 0, \\
    \mt{4433} & \mathrm{if} \; a = 1.
  \end{cases} \\
  C_2(b) &=& 
  \begin{cases}
    \mt{4343} & \mathrm{if} \; b = 0, \\
    \mt{3333} & \mathrm{if} \; b = 1.
  \end{cases} 
\end{eqnarray*}
We can see that 
$\ctlcs(C_1(0), C_2(0)) = \ctlcs(C_1(0), C_2(1)) = \ctlcs(C_1(1), C_2(0)) = 3$
and $\ctlcs(C_1(1), C_2(1)) = 2$ hold.
Next, we construct \emph{vector gadgets} $V_1(\alpha)$ and $V_2(\beta)$ for all $\alpha\in A$ and $\beta\in B$:
\begin{alignat*}{12}
  V_1(\alpha) &={}& &\mt{2}^s \;C_1(\alpha[1])\;&\mt{2}^s&\;C_1(\alpha[2])\;&\mt{2}^s&\;\cdots\;&\mt{2}^s&\; C_1(\alpha[d])\; &\mt{2}^s&\;\mt{3434}\;&\mt{2}^s&, \\
  V_2(\beta)  &={}& &\mt{2}^s \;C_2(\beta[1])\;&\mt{2}^s&\;C_2(\beta[2])\;&\mt{2}^s&\;\cdots\;&\mt{2}^s&\; C_2(\beta[d])\; &\mt{2}^s&\;\mt{3333}\;&\mt{2}^s&,
\end{alignat*}
where $s = 5d$. 
Note that $|V_1(\alpha)| = |V_2(\beta)| = 5d^2 + 14d + 4$.
The following Lemma~\ref{lem:vg_ctlcs} holds. \begin{lemma} \label{lem:vg_ctlcs}
  $\ctlcs(V_1(\alpha), V_2(\beta)) = D + 1$ holds if $\alpha$ and $\beta$ are orthogonal, 
  and otherwise $\ctlcs(V_1(\alpha), V_2(\beta)) \le D$ holds, where $D = s(d+2) + 3(d+1) - 1$.
\end{lemma}
\begin{proof}
  For each $1 \le i \le d$, Cartesian trees of $C_1(\alpha[i])$ and $C_2(\beta[i])$ belong to the left subtree of the $(i+1)$th substring $\mt{2}^s$ 
  in $CT(V_1(\alpha))$ and $CT(V_2(\beta))$, respectively,
  since $C_1(\alpha[i])$ and $C_2(\beta[i])$ consist only of $\mt{3}$ and $\mt{4}$.
Now we consider the following two cases:
  \begin{enumerate}      
    \item If $\alpha$ and $\beta$ are orthogonal, 
      $\ctlcs(C_1(\alpha[i]), C_2(\beta[i])) =  3$ holds for all $1 \leq i \leq d$.
      Therefore, by assigning every $\mt{2}^s$ in $V_2(\beta)$ to every $\mt{2}^s$ in $V_1(\alpha)$, and
      $C_2(\beta[i])$ to $C_1(\alpha[i])$ for each $1 \le i \le d$,
      we obtain a common CT-subsequence of length $D+1$.
      \tmnote*{changed}{Assume on the contrary that there exists a common CT-subsequence $S$ of $V_1(\alpha)$ and $V_2(\beta)$ with $|S| \ge D+2$.
Since the total number of $\mt{3}$'s and $\mt{4}$'s in $V_1(\alpha)$ (or $V_2(\beta)$) is $4(d+1)$,
        any subsequence of $V_1(\alpha)$ (or $V_2(\beta)$) that CT-matches $S$ contains at least $D+2 - 4(d+1)= 5d(d+1) + 4d$ $\mt{2}$'s.
        This implies that such subsequences include at least one $\mt{2}$ from each run $\mt{2}^s = \mt{2}^{5d}$ in the vector gadgets.
        Thus, their Cartesian-trees form comb-like structures with the frontier consisting only of $\mt{2}$'s. 
        Therefore, we have to assign $C_1(\alpha[i])$ to $C_2(\beta[i])$ for each $1 \leq i \leq d$.
However, such an assignment results in a common CT-subsequence of length at most $D+1$ from the above discussion, a contradiction.
      }\item Otherwise, there exists some $k$ with $\alpha[k] = \beta[k] = 1$,
      and $\ctlcs(C_1(\alpha[k]), C_2(\beta[k])) = 2$ holds.
      \tmnote*{changed}{Assume on the contrary that there exists a common CT-subsequence $S$ of $V_1(\alpha)$ and $V_2(\beta)$ with $|S| \ge D+1$.
A subsequence of $V_1(\alpha)$ (or $V_2(\beta)$) that CT-matches $S$ 
        contains at least $D+1 - 4(d+1)= 5d(d+1) + 4d - 1$ $\mt{2}$'s. 
        From a similar discussion to the case $\alpha$ and $\beta$ are orthogonal,
        we have to assign $C_1(\alpha[i])$ to $C_2(\beta[i])$ for each $1 \leq i \leq d$.
However, such an assignment results in a common CT-subsequence of length at most $D$ due to $\ctlcs(C_1(\alpha[k]), C_2(\beta[k])) = 2$,
        a contradiction.
      }\end{enumerate}
\end{proof}

Also, we construct \emph{normalized vector gadgets} 
$N_1(\alpha)$ and $N_2(\beta)$ for all $\alpha\in A$ and $\beta\in B$,
joining dummy vector gadget $V_{\zero} ={} \mt{2}^s C_1(\zero[1]) \mt{2}^s \cdots \mt{2}^s C_1(\zero[d]) \mt{2}^s \mt{4433} \mt{2}^s$:
\begin{alignat*}{12}
  N_1(\alpha) &={} V_1(\alpha) \hspace{3mm} \mt{1}^t &V_{\zero}, \\
  N_2(\beta)  &={} \hspace{2.7mm} \mt{1}^t \hspace{3.4mm} V_2(\beta) \hspace{1.5mm} &\mt{1}^t,
\end{alignat*}
where $t = 10d^2$.
Note that $|N_1(\alpha)| = 20d^2 + 28d + 8$, $|N_2(\beta)| = 25d^2 + 14d + 4$ 
and $\ctlcs(V_{\zero}, V_2(\beta)) = D$.
For these gadgets, the following lemma also holds.
\begin{lemma} \label{lem:nvg_ctlcs}
  $\ctlcs(N_1(\alpha), N_2(\beta)) = E + 1$ holds if $\alpha$ and $\beta$ are orthogonal, 
  and otherwise $\ctlcs(N_1(\alpha), N_2(\beta)) = E$ holds, where $E = t + D$.
\end{lemma}
\begin{proof}
  We consider an alignment of $N_1(\alpha)$ in $N_2(\beta)$.
  \begin{enumerate}
    \item If $\alpha$ and $\beta$ are orthogonal, 
      \tmnote*{changed}{we align $V_1(\alpha)$ to $V_2(\beta)$ and $\mt{1}^t$ in $N_1(\alpha)$ to the right $\mt{1}^t$ in $N_2(\beta)$.
        From such an alignment, we can construct a common CT-subsequence of $N_1(\alpha)$ and $N_2(\beta)$ of length $E+1$
by Lemma~\ref{lem:vg_ctlcs}.
        Assume on the contrary that there exists a common CT-subsequence $S$ of $N_1(\alpha)$ and $N_2(\beta)$ with $|S| > E+1$.
        From a similar discussion to Lemma~\ref{lem:vg_ctlcs},
        any subsequence of $N_2(\beta)$ that CT-matches $S$ includes at least $10d^2-d$ $\mt{1}$'s (from at least one block of $\mt{1}^t$).
Thus, the Cartesian-tree of such a subsequence of $V_2(\beta)$ satisfies any of the following:
      }(1) it forms a comb-like structure with the frontier consisting only of $\mt{1}$'s \ynnote*{added}{(i.e., a subsequence of form $\mt{1}^i x \mt{1}^j$ for some integers $i$, $j$ and some string $x$)}, 
      (2) the characters in $V_2(\beta)$ belongs to some $\mt{1}$ in the left  $\mt{1}^t$ in $N_2(\beta)$ (i.e., a subsequence of form $\mt{1}^i x$), or
      (3) the root of the Cartesian-tree is         some $\mt{1}$ in the right $\mt{1}^t$ in $N_2(\beta)$ and
          the characters in $V_2(\beta)$ belongs to some $\mt{1}$ in the right $\mt{1}^t$ in $N_2(\beta)$ (i.e., a subsequence of form $x \mt{1}^i$).
In the case (1), if we align some $\mt{1}$'s in the left/right $\mt{1}^t$ to the $\mt{1}^t$ in $N_1(\alpha)$,
      we must align the all $\mt{1}$'s in the right/left $\mt{1}$ in $N_2(\beta)$ to the characters $c$ in $V_{\zero}$ / $V_1(\alpha)$, respectively.
      In this condition, the subsequence can not include other $c$ in $V_{\zero}$ / $V_1(\alpha)$.
      For any $c$ in $\{\mt{2}, \mt{3}, \mt{4}\}$, the length of the possible subsequence is clearly less than $E+1$ 
      even if all $\mt{1}$'s in $N_2(\beta)$ can be aligned to $N_1(\alpha)$.
In the case (2), we must align the all $\mt{1}$'s in the left $\mt{1}^t$ in $N_2(\beta)$ to the characters in $\mt{1}^t \; V_{\zero}$, 
      or to the characters only in $V_1(\alpha)$ and $V_{\zero}$.
      Then, from similar discussion to the case (1), the length of the possible subsequence is less than $E$.
In the case (3), we must align the all $\mt{1}$'s in the right $\mt{1}^t$ in $N_2(\beta)$ to the characters in $V_1(\alpha)$,
      or to the characters only in $V_1(\alpha)$ and $V_{\zero}$.
      Then, from similar discussion to the case (1), the length of the possible subsequence is less than $E+1$.
    \item Otherwise, we can align $V_{\zero}$ to $V_2(\beta)$ and $\mt{1}^t$ in $N_1(\alpha)$ to the left $\mt{1}^t$ in $N_2(\beta)$.
      From a similar discussion to the above case and Lemma~\ref{lem:vg_ctlcs}, $\ctlcs(N_1(\alpha), N_2(\beta)) = E$.
  \end{enumerate}
\end{proof}

Now we construct $S_A$ of length $240n_Ad^2 + 56n_Ad + 100d^2 + 16n_A \in O(n_A\mathsf{poly}(d))$ and $T_B$ of length $400n_Ad^2 + 25n_Bd^2 + 100d^3 + 14n_Bd + 100d^2 + 4n_B \in O((n_A + n_B)\mathsf{poly}(d))$ as follows:
\begin{eqnarray*}
  S_A &={}& \mt{0}^u \; (N_1(\alpha_1) \; \mt{0}^u \; N_1(\alpha_2) \; \mt{0}^u \; \cdots \; \mt{0}^u \; N_1(\alpha_{n_A}) \; \mt{0}^u )^2 \\
  T_B &={}& \mt{0}^{u(2n_A + 1)} \; N_2(\beta_1) \; \mt{0}^u \; N_2(\beta_2) \; \mt{0}^u \; \cdots \; \mt{0}^u \; N_2(\beta_{n_B}) \; \mt{0}^{u(2n_A + 1)},
\end{eqnarray*}
where $u = 100d^2$.

We show the following lemmas:
\begin{lemma}\label{lem:seq_gadgets_1}
  If there exist $\alpha \in A$ and $\beta \in B$ that are orthogonal,
  then $\ctlcs(S_A, T_B) \ge F + 1$ holds, where $F = En_B + u(2n_A + 1)$.
\end{lemma}
\begin{proof}
  In this proof, we give an alignment of $T_B$ to $S_A$ 
  from which a common CT-subsequence of length $F + 1$ can be constructed.	
  Let $\alpha_i \in A$ and $\beta_j \in B$ be vectors which are orthogonal. 
  If $j \leq i$, we align $N_2(\beta_j)$ to the \emph{first} $N_1(\alpha_i)$ in $S_A$.
  By doing so, it is guaranteed that
  every normalized vector gadget $N_2(\beta_x)$ in $T_B$~($x \ne j$) can be aligned to
  some normalized vector gadget in $S_A$. 
  Symmetrically, if $j > i$, we align $N_2(\beta_j)$ to the \emph{second} $N_1(\alpha_i)$ in $T_A$.
  Then, we align all the normalized vector gadgets in $T_B$, except $N_2(\beta_j)$, 
  be aligned to normalized vector gadgets to the right and left of $N_1(\alpha_i)$ 
  in order from closest to $N_1(\alpha_i)$, and
$u(2n_A+1)$ $\mt{0}$'s in $T_B$ are aligned to $\mt{0}$'s which interleave normalized vector gadgets in $S_A$. The length of the common CT-subsequence constructed from such alignment 
  is at least $E + 1 + (n_B-1)E + u(2n_A+1) = F + 1$ by Lemma~\ref{lem:nvg_ctlcs}.
\end{proof}

\begin{lemma}\label{lem:seq_gadgets_2}
  If $\alpha$ and $\beta$ are not orthogonal for all pairs of $\alpha \in A$ and $\beta \in B$,
  then $\ctlcs(S_A, T_B) \le F$ holds.
\end{lemma}
\begin{proof}
\ynnote*{changed}{Since $\alpha$ and $\beta$ are not orthogonal, 
any alignment between two normalized vector gadgets has length at most $E$ by Lemma~\ref{lem:nvg_ctlcs}.
By a similar alignment of the proof of Lemma~\ref{lem:seq_gadgets_1}, 
  we can construct a common CT-subsequence of length $En_B + u(2n_A+1) = F$ from the condition,
  such that we align all the normalized vector gadgets in $T_B$ 
  be aligned to normalized vector gadgets in $S_A$
  in order, 
  and $u(2n_A+1)$ $\mt{0}$'s in $T_B$ are aligned to $\mt{0}$'s which interleave normalized vector gadgets in $S_A$.
}Here, we prove such alignment, which we denote by $M$, is optimal.
  \ynnote*{modified}{Assume on the contrary that there exists a subsequence $S'$ of $S_A$ that CT-matches a common CT-subsequence of $T_B$ with $|S'| \ge F+1$.
  }Since the total length of normalized vector gadgets in $S_A$ is $40n_Ad^2 + 56n_Ad + 16n_A$,
  $S'$ must include at least $F - (40n_Ad^2 + 56n_Ad + 16n_A) > 88n_Ad^2 + 100d^2 > u$ $\mt{0}$'s.
We consider the following two cases:
\begin{enumerate}
    \item \label{case:1}
      If the common CT-subsequence includes all $\mt{0}$'s in $S_A$, 
      since the Cartesian-tree of such subsequences forms a comb-like structure 
      (the frontier consists of $\mt{0}$'s
      and each normalized vector gadget belongs to the left child of $\mt{0}$),
      we can not align a normalized vector gadget in $T_B$ to two or more normalized vector gadgets
      in $S_A$ keeping the isomorphism.
      \ynnote*{modified}{There can be a possible alignment of the reverse direction (i.e., we can align a single normalized vector gadget in $S_A$ to multiple normalized vector gadgets in $T_B$),
        but the alignment
is inefficient keeping the isomorphism since non-aligned normalized vector gadgets in such alignment increase more than that in $M$ (recall that $n_B \le n_A$).
      More formally, if we align a single $N_1(\alpha)$ to $k$ normalized vector gadgets in $T_B$, the possibility of the length of common subsequence is $|N_1(\alpha)|$ since 
      $|N_1(\alpha)| = 20d^2 + 28d + 8 < k|N_2(\beta)| = k(25d^2 + 14d + 4)$ for any $k \ge 2$.
      In the above (optimal) alignment, 
      we have a length-$kE$ common subsequence by Lemma~\ref{lem:nvg_ctlcs}, 
      but $kE = k(15d^2+13d+2) > |N_1(\alpha)|$ holds for any $k$ and $d$.
      }Thus, $M$ is optimal.
\item \label{case:2}
      Otherwise, we drop some $\mt{0}$'s in $S_A$ (but such subsequences contains one or more $\mt{0}$'s).
      Here, we consider the following two cases w.r.t. the structure of Cartesian-tree of the subsequences:
      \begin{enumerate}
        \item If the subsequences include one or more $\mt{0}$'s in each $\mt{0}^u$ in $S_A$, 
          the Cartesian-tree of the subsequences forms a comb-like structure.
          Thus, from a similar discussion to the case~\ref{case:1}, 
          we have to align the normalized vector gadgets in $S_A$ and $T_B$ one to one 
          for the longest common CT-subsequence. 
          Then, the length of the subsequences constructed such alignment is clearly less than $F+1$.
        \item Otherwise, the Cartesian-tree of the subsequences forms the "mixed comb-like" structures 
          which include some mixed normalized vector gadgets belonging to the left child of $\mt{0}$, 
          and include less than a mixed normalized vector gadget belonging to the right child of the rightmost $\mt{0}$. 
          In such subsequences, in the case we mix two normalized vector gadgets belonging to the left child of $\mt{0}$, we drop a $\mt{0}^u$.
          However, even if we can align all characters in two normalized vector gadgets to $T_B$, 
          the length of such subsequence is less than the length of
          subsequences which include $\mt{2}^n$ between two normalized vector gadgets
          since $2|N_1(\alpha)| < u$ holds. 
          The similar discussion can be applied to the case we mix three or more normalized vector gadgets
          since $(x+1)|N_1(\alpha)| < xu$ holds for integer $x > 2$.
          Also, the similar discussion can be applied to the case we mix two or more normalized vector gadgets belonging to the right child of $\mt{0}$,
          since $x|N_1(\alpha)| < xu$ holds for integer $x > 2$.
          Therefore, the length of the subsequences constructed mixed comb-like structures is less than the length of the subsequences
          constructed comb-like structures (that is, less than $F+1$).
      \end{enumerate}
  \end{enumerate}
\end{proof}

From Lemmas~\ref{lem:seq_gadgets_1} and~\ref{lem:seq_gadgets_2}, 
we have a reduction from the OV problem to {\CTLCSprob} such that
$S_A \in O(n_A\mathsf{poly}(d))$ and $T_B \in O((n_A + n_B)\mathsf{poly}(d))$.
If {\CTLCSprob} can be solved in $O((|S_A||T_B|)^{1-\epsilon})$ time, 
then the OV problem can be solved in $O((n_A^2\mathsf{poly}(d))^{1-\epsilon})$ time, 
which contradicts OVH (and SETH).
Therefore, Theorem~\ref{th:ov2lcs} holds.

 \section{Conclusions and future work}
\label{sec:concl}

This paper studied the two problems, {\CTMSeqprob} and {\CTLCSprob},
that relate to
subsequences matching under the Cartesian-tree equivalence.
\tmnote*{changed}{We showed that quadratic and weakly subquadratic solutions
exist for {\CTMSeqprob} and {\CTLCSprob}, respectively, in the case of binary alphabets.
}We also presented strongly subquadratic solutions are unlikely to exist
for both {\CTMSeqprob} and {\CTLCSprob} for alphabets of sizes 4 and 5, respectively.
In addition, we gave a polynomial time solution
to {\CTLCSprob} in the case of general ordered alphabets.

A large gap remains between
the $O(n^6)$-time upper bound and the $O(n^{2-\epsilon})$-time lower bound
for {\CTLCSprob} in the general case,
and how to close this gap is an intriguing open question.
Another open question is to close the gaps in
the alphabet sizes for the subquadratic time complexities of {\CTMSeqprob} and {\CTLCSprob}.

\section*{Acknowledgments}
The authors thank Shay Mozes, Oren Weimann, and Tsubasa Oizumi for discussions on subsequence CT-matching in the binary case.

This work was supported by JSPS KAKENHI Grant Numbers
JP23H04381, JP24K20734~(TM),
JP21K17705, JP23H04386~(YN),
JP20H05964, JP23K24808~(SI).

\bibliography{ref}
\end{document}